\documentclass{article}
\usepackage[utf8]{inputenc}
\usepackage[T1]{fontenc}
\usepackage{amsmath, amssymb, amsfonts, amsthm, mathtools}
\usepackage{geometry}
\usepackage{hyperref}
\usepackage{authblk}

\newtheorem{theorem}{Theorem}
\newtheorem{lemma}{Lemma}
\newtheorem{proposition}{Proposition}

\theoremstyle{definition}
\newtheorem{definition}{Definition}
\theoremstyle{remark}
\newtheorem{remark}{Remark}

\DeclareMathOperator{\E}{\mathbb{E}}
\newcommand{\R}{\mathbb{R}}

\title{Temporal Matrix Scale Invariance and the Classification of Tipping Points}
\author[1,2]{Alejandro Frank\thanks{Member of El Colegio Nacional.}}
\author[1,3]{Laurence A. Jacobs\thanks{Correspondence: laurence.jacobs@uzh.ch}}
\affil[1]{Centro de Ciencias de la Complejidad, Universidad Nacional Autónoma de México, Mexico City, Mexico}
\affil[2]{Instituto de Ciencias Nucleares, Universidad Nacional Autónoma de México, Mexico City, Mexico}
\affil[3]{Center for Molecular Cardiology, University of Zürich, Switzerland}
\date{May 4, 2026}

\begin{document}

\maketitle

\begin{abstract}
We introduce temporal matrix scale invariance (tMSI), a mathematical structure for the two-time correlation kernel of a multivariate observable. A kernel $C(t, t')$ satisfies tMSI of order $\alpha$ if $C(kt, kt') = k^{-\alpha}C(t, t')$ for all $k > 0$; this condition holds at or near a tipping point, where the divergence of the coherence time produces temporal scale freedom. By a kernel factorization theorem, every tMSI kernel separates into a power-law envelope $(tt')^{-\alpha/2}$ and a shape function $F(t/t')$ diagonalized by the Mellin transform. This structure reveals a decoupling of two independent exponents: the dynamical exponent $\alpha$, carried by the envelope, and the spectral relaxation exponent $\beta$, determined by the eigenvalue decay of the finite-dimensional truncation. Their equality $\alpha = \beta$ characterizes a simple critical point; their inequality $\alpha \neq \beta$ is the signature of temporal multicriticality.

The main result is a complete classification of tipping points. The Landau quartic coefficient $a_4$ of the order parameter is given exactly by
\[
a_4 = p^2 + q^2 - 2\lambda p q - g^2_{\alpha\alpha\beta} \Gamma(\sigma_\alpha, \sigma_\beta),
\]
where $\lambda = 2\sqrt{\sigma_\alpha\sigma_\beta}/(\sigma_\alpha + \sigma_\beta) \in (0, 1]$, $g_{\alpha\alpha\beta}$ is the three-point structure constant, and $\Gamma > 0$ is in explicit closed form. The tipping point is continuous when $a_4 > 0$, tricritical when $a_4 = 0$, and discontinuous when $a_4 < 0$. The simple critical point $\alpha = \beta$ is maximally fragile: any nonzero operator mixing drives $a_4 < 0$, so the synchronized state generically sits at the edge of catastrophe.

The framework yields a matrix-valued early warning diagnostic, computable from a multivariate time series without knowledge of the underlying equations, that classifies an approaching tipping point as recoverable or catastrophic. Applications to epileptic seizure onset and acute myocardial infarction are discussed.
\end{abstract}

\textbf{Keywords:} scale invariance, tipping points, Mellin transform, multicriticality, early warning signals, Kuramoto synchronization

\section{Introduction}
The concept of temporal matrix scale invariance was introduced by Frank [1] as an empirical framework for detecting dynamical criticality in complex systems through the temporal covariance structure of multivariate data. The present paper provides the rigorous mathematical foundations of that framework.

The abrupt, often irreversible transitions that characterize tipping points—epileptic seizure onset, cardiac infarction, ecological collapse—present a fundamental diagnostic challenge: the system appears stable until it does not, and by the time the transition is observable it is frequently too late to intervene. Existing early warning signals, based on rising variance and increasing autocorrelation near a critical point [2–6], are scalar and univariate; they detect the approach to criticality but cannot distinguish a continuous, recoverable transition from a discontinuous, hysteretic catastrophe. This distinction is precisely the clinically and physically consequential one.

We address this gap by introducing a rigorous mathematical object. Consider a multivariate observable $X(t) \in \R^N$ and its matrix-valued two-time correlation kernel
\[
C(t, t') = \E\left[ X(t)\otimes X(t') \right].
\]
We say $C$ satisfies temporal matrix scale invariance (tMSI) of order $\alpha > 0$ if
\[
C(kt, kt') = k^{-\alpha}C(t, t') \quad \forall k > 0, \quad t, t' \in \R^+. \tag{1}
\]
This condition holds at or near a tipping point, where the absence of a characteristic timescale produces temporal scale freedom: as the system approaches the critical parameter value $K_c$, the coherence time $\tau \sim (K_c - K)^{-\nu}$ diverges, and $C(t, t')$ loses its intrinsic timescale. At $K = K_c$, condition (1) is exact; for $K$ near $K_c$, it holds approximately with corrections controlled by $\sigma \sim \tau^{-1} \to 0$.

The mathematical consequences of tMSI are unexpectedly rich. By a kernel factorization theorem (Section 2), every kernel satisfying (1) separates into a universal power-law envelope $(tt')^{-\alpha/2}$ and a shape function $F(t/t')$ depending only on the ratio of its time arguments. The Mellin transform—the Fourier theory of the multiplicative group $(\R^+, \times)$—diagonalizes this structure, yielding generalized eigenfunctions $t^{-\alpha/2+i\omega}$ and a Mellin multiplier $\tilde{F}(\omega)$ encoding the full spectral content of the temporal correlations. For the natural class of kernels arising from critical dynamics, $\tilde{F}(\omega)$ is a Lorentzian whose width $\sigma$ is the inverse coherence time.

This spectral structure reveals a fundamental decoupling (Section 3). The dynamical exponent $\alpha$, carried by the kernel envelope, governs the scaling of correlations under time dilation. The spectral relaxation exponent $\beta$, determined by the eigenvalue decay of the finite-dimensional truncation of $C$, is an independent quantity controlled by the Lorentzian width $\sigma$. Their equality $\alpha = \beta$ characterizes a simple critical point; their inequality $\alpha \neq \beta$ is the signature of temporal multicriticality—the coexistence of multiple independent scaling dimensions in the dynamical structure of the system.

The main result (Section 5) is a complete classification of tipping points in terms of these two exponents and the structure constant $g_{\alpha\alpha\beta}$ of their operator mixing. We prove that the Landau quartic coefficient $a_4$ of the order parameter is given exactly by
\[
a_4 = p^2 + q^2 - 2\lambda p q - g^2_{\alpha\alpha\beta} \Gamma(\sigma_\alpha, \sigma_\beta), \tag{2}
\]
where $p = c_\alpha/\sqrt{\sigma_\alpha}$, $q = c_\beta/\sqrt{\sigma_\beta}$, $\lambda = 2\sqrt{\sigma_\alpha\sigma_\beta}/(\sigma_\alpha + \sigma_\beta) \in (0, 1]$, and $\Gamma > 0$ is given in explicit closed form (equation (9)). The tipping point is continuous when $a_4 > 0$, tricritical when $a_4 = 0$, and discontinuous and hysteretic when $a_4 < 0$. In particular, the simple critical point $\alpha = \beta$ is maximally fragile: $\lambda = 1$ forces the positive part of $a_4$ to vanish, so any nonzero $g_{\alpha\alpha\beta}$ drives $a_4 < 0$ and the transition becomes catastrophic.

This classification has an immediate diagnostic implication (Section 6). Given any multivariate time series, one constructs the empirical temporal correlation matrix, tests for tMSI, and extracts the estimators $\hat{\alpha}$ and $\hat{\beta}$ from the scaling and eigenvalue spectrum respectively. The ratio $D = \hat{\alpha}/\hat{\beta}$ classifies the approaching tipping point—if one exists—as recoverable or catastrophic, without knowledge of the underlying equations of motion and without model fitting beyond what the observable time series provides.

Section 7 illustrates the framework with two physical examples where the tMSI structure is transparent and the distinction between continuous and discontinuous tipping points is consequential: epileptic seizure onset and acute myocardial infarction. Section 8 discusses the connection to Kuramoto synchronization theory, the relationship to the spatial MSI framework of [7], and open mathematical directions.

\section{Temporal Matrix Scale Invariance}
\subsection{The Hilbert space setting}
The natural function space for dilation-covariant objects on $\R^+$ is the Hilbert space
\[
H = L^2\left( \R^+, \frac{dt}{t} \right),
\]
equipped with the Haar measure of the multiplicative group $(\R^+, \times)$. The dilation operator $D_k f(t) = f(kt)$ is unitary on $H$, and the family $\{D_k\}_{k>0}$ is a strongly continuous one-parameter unitary group with self-adjoint generator $A = -it \frac{d}{dt}$ [8, 9].

\subsection{Definition and kernel factorization}
\begin{definition}[Temporal matrix scale invariance]
A symmetric kernel $C : \R^+ \times \R^+ \to \R^{N \times N}$ satisfies temporal matrix scale invariance (tMSI) of order $\alpha > 0$ if
\[
C(kt, kt') = k^{-\alpha}C(t, t') \quad \forall k > 0.
\]
The scalar $\alpha > 0$ is the dynamical exponent.
\end{definition}

\begin{theorem}[Kernel factorization]
A symmetric kernel $C(t, t') = C(t', t)$ on $\R^+ \times \R^+$ satisfies tMSI of order $\alpha$ if and only if
\[
C(t, t') = (tt')^{-\alpha/2} F\left( \frac{t}{t'} \right), \tag{3}
\]
where $F : \R^+ \to \R$ is a symmetric shape function satisfying $F(u) = F(1/u)$.
\end{theorem}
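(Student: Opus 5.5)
The proof is a direct verification in both directions. The only idea needed is to define the shape function by restricting $C$ to a curve transverse to the dilation orbits, namely the hyperbola $tt'=1$.

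\emph{Sufficiency.} Suppose $C(t,t') = (tt')^{-\alpha/2}F(t/t')$. Then for every $k>0$,
\[
C(kt,kt') = (k^2 tt')^{-\alpha/2}F\left(\frac{kt}{kt'}\right) = k^{-\alpha}(tt')^{-\alpha/2}F\left(\frac{t}{t'}\right) = k^{-\alpha}C(t,t').
\]
This is tMSI of order $\alpha$. Symmetry of $C$ follows from $F(u)=F(1/u)$, because swapping $t$ and $t'$ replaces $u=t/t'$ by $1/u$ and leaves the envelope $(tt')^{-\alpha/2}$ unchanged.

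\emph{Necessity.} Suppose $C$ satisfies tMSI of order $\alpha$. Define
\[
F(u) := C\left(u^{1/2},u^{-1/2}\right), \qquad u>0.
\]
Given $(t,t')\in\R^+\times\R^+$, set $u=t/t'$ and $k=(tt')^{1/2}$. Then
\[
k\,u^{1/2} = (tt')^{1/2}(t/t')^{1/2} = t, \qquad k\,u^{-1/2} = t'.
\]
Applying tMSI with this $k$ gives
\[
C(t,t') = C\left(k u^{1/2}, k u^{-1/2}\right) = k^{-\alpha}F(u) = (tt')^{-\alpha/2}F(t/t').
\]
The symmetry of $F$ comes from the symmetry of $C$:
\[
F(1/u) = C\left(u^{-1/2},u^{1/2}\right) = C\left(u^{1/2},u^{-1/2}\right) = F(u).
\]

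The factorization is unique. If $C$ had a second representation with shape function $G$, evaluating both at $t'=1/t$ (where $tt'=1$) shows $F(t^2)=G(t^2)$ for all $t>0$, so $F=G$.

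For matrix-valued $C$, the same argument applies entrywise. The only change is that "symmetric" must be read consistently for matrices. If symmetry means $C(t,t')=C(t',t)$ as stated, then $F:\R^+\to\R^{N\times N}$ satisfies $F(u)=F(1/u)$. If instead it means $C(t,t')=C(t',t)^{\top}$, the conclusion becomes $F(1/u)=F(u)^{\top}$.

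There is no real obstacle here. The one step requiring care is choosing the dilation $k=(tt')^{1/2}$, which moves an arbitrary point $(t,t')$ onto the hyperbola $tt'=1$. Since the dilation orbits are exactly the rays $t/t'=\text{const}$, each orbit meets this hyperbola exactly once. So $C$ is determined by its values on the hyperbola, and those values, indexed by $u=t/t'$, are precisely $F$.
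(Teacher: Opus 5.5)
Your proof is correct and follows the same strategy as the paper: apply tMSI with a point-dependent dilation so that $C$ reduces to a one-variable function of $t/t'$. The only difference is the cross-section of the dilation orbits. The paper normalizes on the line $t'=1$ via $k=1/t'$, obtaining $G(u)=C(u,1)$ with $G(u)=u^{-\alpha}G(1/u)$, and then symmetrizes with $F(u)=u^{\alpha/2}G(u)$; your swap-invariant hyperbola $tt'=1$ with $k=(tt')^{1/2}$ makes $F(u)=F(1/u)$ immediate, and your uniqueness and matrix-transpose remarks are correct additions.
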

\begin{proof}
Sufficiency. If $C$ has the form (3), then $C(kt, kt') = (kt kt')^{-\alpha/2} F(kt/kt') = k^{-\alpha}(tt')^{-\alpha/2} F(t/t') = k^{-\alpha}C(t, t')$.

Necessity. Suppose $C(kt, kt') = k^{-\alpha}C(t, t')$ for all $k > 0$. Setting $k = 1/t'$ gives $C(t/t', 1) = (t')^{\alpha} C(t, t')$, so $C(t, t') = (t')^{-\alpha} G(t/t')$ where $G(u) = C(u, 1)$. Symmetry $C(t, t') = C(t', t)$ requires $(t')^{-\alpha} G(t/t') = t^{-\alpha} G(t'/t)$, i.e., $G(u) = u^{-\alpha} G(1/u)$. Define $F(u) = u^{\alpha/2} G(u)$; then $F(1/u) = u^{-\alpha/2} G(1/u) = u^{-\alpha/2} u^{\alpha} G(u) = u^{\alpha/2} G(u) = F(u)$, and substituting back gives (3).
\end{proof}

\begin{remark}
The factorization (3) separates the kernel into two components with distinct roles. The power-law envelope $(tt')^{-\alpha/2}$ carries the dynamical exponent $\alpha$ and governs the scaling of $C$ under time dilation. The shape function $F(t/t')$ depends only on the ratio of its arguments and encodes the correlation structure in log-time. All tMSI kernels of order $\alpha$ share the same envelope; they differ only in their shape functions.
\end{remark}

\subsection{Mellin diagonalization}
The Mellin transform on $H$ is defined by
\[
\hat{f}(\omega) = \int_0^\infty f(t) t^{-i\omega} \frac{dt}{t}, \quad \omega \in \R,
\]
with inversion $f(t) = (2\pi)^{-1} \int_{-\infty}^\infty \hat{f}(\omega) t^{i\omega} d\omega$. The Mellin transform is an isometric isomorphism $H \to L^2(\R, d\omega/2\pi)$ and diagonalizes dilations: $\widehat{D_k f}(\omega) = k^{i\omega} \hat{f}(\omega)$ [10].

\begin{theorem}[Mellin diagonalization]
Let $C$ be a tMSI kernel of order $\alpha$ with shape function $F$. Write $C = \Pi_\alpha K \Pi_\alpha$ where $(\Pi_\alpha f)(t) = t^{-\alpha/2} f(t)$ and $K$ has dilation-invariant kernel $K(t, t') = F(t/t')$. Then $K$ is diagonalized by the Mellin transform:
\[
\widehat{K f}(\omega) = \tilde{F}(\omega) \hat{f}(\omega), \quad \tilde{F}(\omega) = \int_0^\infty F(u) u^{-i\omega} \frac{du}{u}.
\]
The generalized eigenfunctions of $C$ are $\psi_\omega(t) = t^{-\alpha/2 + i\omega}$, $\omega \in \R$, with eigenvalue $\tilde{F}(\omega)$. The spectral representation of the kernel is
\[
C(t, t') = \frac{1}{2\pi} \int_{-\infty}^\infty \tilde{F}(\omega) \psi_\omega(t) \psi_\omega(t') d\omega.
\]
\end{theorem}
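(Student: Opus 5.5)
The proof has three steps: establish the factorization $C=\Pi_\alpha K\Pi_\alpha$, show that $K$ is a multiplicative convolution (hence a Mellin multiplier), and then transport eigenfunctions and the spectral representation through $\Pi_\alpha$. All integrals are interpreted with respect to the Haar measure $dt/t$, which is the measure implicit in the operator statement of the theorem.

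\textbf{Step 1: factorization.} By the kernel factorization theorem, $C(t,t')=t^{-\alpha/2}F(t/t')\,t'^{-\alpha/2}$. So, as an integral operator against $dt'/t'$, $C$ equals multiplication by $t^{-\alpha/2}$, then the operator with kernel $F(t/t')$, then multiplication by $t'^{-\alpha/2}$. This is exactly $C=\Pi_\alpha K\Pi_\alpha$.

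\textbf{Step 2: $K$ is a Mellin multiplier.} Write
\[
(Kf)(t)=\int_0^\infty F(t/t')f(t')\,\frac{dt'}{t'},
\]
which is a convolution on the group $(\R^+,\times)$. Take the Mellin transform and use Fubini, justified first for $F\in L^1(\R^+,du/u)$ and $f\in H$. Substituting $t=ut'$ and using invariance of $dt/t$ gives
\[
t^{-i\omega}=u^{-i\omega}t'^{-i\omega},
\]
so the double integral factors as $\tilde F(\omega)\hat f(\omega)$. This is the convolution theorem on a locally compact abelian group, transported along $t=e^x$ to the ordinary Fourier convolution theorem on $\R$.

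Because $F(u)=F(1/u)$, $\tilde F$ is even. Because $F$ is real, $\tilde F$ is real, so $K$ is self-adjoint.

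\textbf{Step 3: eigenfunctions and spectral representation.} For the generalized eigenfunctions, compute $K$ on the pure exponential $t^{i\omega}$. The substitution $t'=t/u$ gives
\[
K[t^{i\omega}]=t^{i\omega}\int F(u)u^{-i\omega}\,\frac{du}{u}=\tilde F(\omega)\,t^{i\omega}.
\]
The claim $C\psi_\omega=\tilde F(\omega)\psi_\omega$ with $\psi_\omega=\Pi_\alpha[t^{i\omega}]$ should be read in the symmetrized sense: $\Pi_\alpha$ intertwines the eigenfunctions of $K$ with those of $C$. Literally, $C\psi_\omega=\Pi_\alpha K\Pi_\alpha^2[t^{i\omega}]$, and $\Pi_\alpha^2$ shifts the exponent. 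The clean statement is therefore that $\Pi_\alpha^{-1}C\Pi_\alpha^{-1}=K$ has eigenfunctions $t^{i\omega}$. Equivalently, $C$ is diagonal for the weighted pairing in which $\psi_\omega(t)\psi_{\omega'}(t')$ pairs $t^{-\alpha/2}$ factors. I would state this precisely and adopt that reading.

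For the spectral representation, apply Mellin inversion to $F$:
\[
F(t/t')=\frac{1}{2\pi}\int\tilde F(\omega)\,t^{i\omega}t'^{-i\omega}\,d\omega.
\]
Multiplying by $(tt')^{-\alpha/2}$ gives the stated formula with $\psi_\omega(t)\overline{\psi_\omega(t')}$. Since $\tilde F$ is even, this can equally be written with $\psi_\omega(t')$ after $\omega\to-\omega$, matching the paper's form.

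\textbf{Main obstacle.} The expected difficulty is analytic, not algebraic. The functions $t^{i\omega}$ and $\psi_\omega$ are not in $H$, so "eigenfunction" must mean generalized eigenfunction, in the sense of a rigged Hilbert space or a direct-integral decomposition. The plan is to prove the multiplier identity on a dense class, namely Mellin transforms of Schwartz functions in $\log t$. It then extends to all of $H$ by Plancherel whenever $\tilde F\in L^\infty$, which holds if $F\in L^1(du/u)$ and covers the Lorentzian case. Pointwise inversion for $F$ additionally requires $\tilde F\in L^1$; otherwise the kernel identity is understood distributionally.
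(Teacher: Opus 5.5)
Your Steps 1--3 are correct up to the last sentence of Step 3. Your diagnosis of the eigenfunction claim is also right: with respect to $dt/t$, $C\psi_\omega=\Pi_\alpha K[t^{-\alpha+i\omega}]$ is formally $\tilde F(\omega+i\alpha)\,t^{-3\alpha/2+i\omega}$, not a multiple of $\psi_\omega$. For the Lorentzian with $\sigma\le\alpha$, i.e.\ near criticality, the integral does not even converge.

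The step that fails is your claim that evenness of $\tilde F$ lets you drop the conjugate on $\psi_\omega(t')$. The unconjugated integrand $\psi_\omega(t)\psi_\omega(t')=(tt')^{-\alpha/2+i\omega}$ depends only on the product $tt'$, so Mellin inversion gives
\[
\frac{1}{2\pi}\int_{-\infty}^{\infty}\tilde F(\omega)\,\psi_\omega(t)\psi_\omega(t')\,d\omega=(tt')^{-\alpha/2}F(tt'),
\]
not $(tt')^{-\alpha/2}F(t/t')$. Evenness of $\tilde F$ (i.e.\ $F(u)=F(1/u)$) only gives $F(tt')=F(1/(tt'))$. The substitution $\omega\to-\omega$ sends $\psi_\omega(t)\psi_{-\omega}(t')$ to $\psi_{-\omega}(t)\psi_\omega(t')$, so the two factors still carry opposite frequencies and the conjugate cannot be removed. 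Concretely, for $F(u)=c\rho^{|\ln u|}$ at $t=t'=2$, the unconjugated formula gives $2^{-\alpha}c\,4^{-\sigma}$, whereas $C(2,2)=2^{-\alpha}c$.

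So the displayed spectral representation is false as literally written. You should say so, exactly as you did for the eigenfunction claim, rather than claim agreement. The correct identity is the one you actually derived, with $\overline{\psi_\omega(t')}=\psi_{-\omega}(t')$.

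Both corrections become literal statements if you make your ``weighted pairing'' precise. On $H_\alpha=L^2(\R^+,t^{\alpha}\,dt/t)$, let $C$ act by $(Cg)(t)=\int_0^\infty C(t,t')g(t')\,t'^{\alpha}\,dt'/t'$. Then:
\begin{itemize}
\item $\Pi_\alpha:H\to H_\alpha$ is unitary and $C=\Pi_\alpha K\Pi_\alpha^{-1}$, so $C\psi_\omega=\tilde F(\omega)\psi_\omega$ holds as a genuine generalized eigen-relation.
\item $\frac{1}{2\pi}\int\psi_\omega(t)\overline{\psi_\omega(t')}\,d\omega$ is the identity kernel for the measure $t'^{\alpha}dt'/t'$, which makes the conjugated formula the spectral resolution of $C$ on $H_\alpha$.
\end{itemize}
The rest of your route is the standard argument: the convolution theorem on $(\R^+,\times)$, proved on a dense class and extended by Plancherel using $\tilde F\in L^\infty$. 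The paper itself gives no proof beyond citing Theorem 6 of reference [11], so it addresses neither of these two points.
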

\begin{proof}
Identical to the proof of Theorem 6 in [11], with the spatial index $x$ replaced by the time variable $t$.
\end{proof}

\subsection{The Lorentzian kernel and its multiplier}
For the natural class of kernels arising from critical dynamics, the Lorentzian Mellin multiplier plays a central role.

\begin{proposition}[Lorentzian multiplier]
For the shape function $F(u) = c \rho^{|\ln u|}$ with $0 < \rho < 1$ and $c > 0$, the Mellin multiplier is
\[
\tilde{F}(\omega) = \frac{2c\sigma}{\sigma^2 + \omega^2}, \quad \sigma = \ln(1/\rho) > 0.
\]
The width $\sigma$ equals the inverse coherence time $\tau^{-1}$: as the system approaches criticality, $\sigma \to 0$ and the Lorentzian concentrates its weight at $\omega = 0$, reflecting the divergence of temporal correlations.
\end{proposition}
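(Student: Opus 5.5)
The computation is a direct evaluation of the Mellin integral in the definition from the Mellin diagonalization theorem, after passing to log-time. Substituting $u = e^{s}$, $s \in \R$, turns the Haar measure $du/u$ into Lebesgue measure $ds$. Since $\rho^{|\ln u|} = e^{-\sigma|s|}$ with $\sigma = \ln(1/\rho) > 0$, we get
\[
\tilde{F}(\omega) = c\int_{-\infty}^{\infty} e^{-\sigma|s|} e^{-i\omega s}\, ds .
\]
So the multiplier is just the Fourier transform of a two-sided exponential. The hypothesis $0<\rho<1$ is exactly what makes $\sigma>0$. This gives $F \in L^1(\R^+, du/u)$, so the integral converges absolutely for every real $\omega$.

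Next, I will split the integral at $s=0$. The piece over $s>0$ equals $c\int_0^\infty e^{-(\sigma+i\omega)s}\,ds = c/(\sigma+i\omega)$, using $\operatorname{Re}(\sigma+i\omega)=\sigma>0$ to discard the boundary term at infinity. The piece over $s<0$ becomes $c/(\sigma-i\omega)$ after $s\mapsto -s$; this uses the symmetry $F(u)=F(1/u)$ guaranteed by the kernel factorization theorem. Adding the two conjugate terms gives
\[
c\,\frac{(\sigma-i\omega)+(\sigma+i\omega)}{\sigma^2+\omega^2} = \frac{2c\sigma}{\sigma^2+\omega^2},
\]
which is real, even, and positive, consistent with $C$ being a symmetric positive kernel.

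For the interpretive claim, I will read $\sigma$ as an inverse timescale in log-time: $F(t/t') = c\,e^{-|\ln t - \ln t'|/\tau}$ with $\tau = 1/\sigma$. This identifies $\tau$ as the coherence time, taken in the multiplicative (log-time) sense that the Haar measure makes natural.

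To justify the concentration statement, I note that $\int_{\R}\tilde{F}(\omega)\,d\omega/2\pi = c = F(1)$ for every $\sigma$, while $\tilde{F}(\omega)\to 0$ for each fixed $\omega\neq 0$ as $\sigma\to 0$. Hence $\tilde{F}/(2\pi c)$ is an approximate identity converging weakly to $\delta_0$. The only real obstacle is this interpretive part rather than the computation: identifying $\sigma=\tau^{-1}$ is a definition in log-time, not a derived fact. I would state it as such and not attempt to link it rigorously to $(K_c-K)^{-\nu}$.
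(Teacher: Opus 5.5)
Your computation is correct and is the paper's proof: the substitution $s=\ln u$ turns $\tilde F$ into the Fourier transform of $c\,e^{-\sigma|s|}$. The paper evaluates this in one line, and you evaluate it by splitting at $s=0$; the symmetry you use there is already explicit from the $|\ln u|$ in $F$, so you do not need to appeal to the factorization theorem.

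In your interpretive add-on, which the paper does not prove, the inference to concentration at $\omega=0$ is not valid as stated: constant mass plus pointwise decay away from the origin also holds as $\sigma\to\infty$, where the Lorentzian spreads out instead of concentrating. To fix it, cite the scaling form $\tilde F(\omega)=\sigma^{-1}\phi(\omega/\sigma)$ with $\phi(x)=2c/(1+x^2)$, or the tail bound $\int_{|\omega|>\delta}\tilde F(\omega)\,d\omega/2\pi=\frac{2c}{\pi}\arctan(\sigma/\delta)\to 0$.
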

\begin{proof}
Setting $s = \ln u$: $\tilde{F}(\omega) = c \int_{-\infty}^\infty e^{-\sigma|s|} e^{-i\omega s} ds = 2c\sigma/(\sigma^2 + \omega^2)$.
\end{proof}
The connection to dynamics is direct. Near a tipping point, the fluctuation-dissipation theorem relates $\tilde{F}(\omega)$ to the dynamical susceptibility: $\tilde{F}(\omega) = T \cdot \chi(\omega)$, where $\chi(\omega) = \chi_0/(\sigma^2 + \omega^2)$ near criticality. The Lorentzian form of $\tilde{F}$ is therefore not an assumption but a consequence of the fluctuation-dissipation structure of the approach to the critical point.

\section{The Two Exponents and Temporal Multicriticality}

\begin{definition}[Dynamical and spectral exponents]
Let $C$ be a tMSI kernel of order $\alpha$ with finite-$N$ truncation $C_N(i, j) = C(t_i, t_j)$ for $t_1 < \dots < t_N$.
\begin{enumerate}
    \item The dynamical exponent $\alpha$ is defined by the tMSI condition (1) and carried by the power-law envelope $(tt')^{-\alpha/2}$.
    \item The spectral relaxation exponent $\beta$ is the effective power-law index of the ordered eigenvalues $\lambda_1 \ge \lambda_2 \ge \dots \ge \lambda_N$ of $C_N$: $\lambda_n \approx c'/n^\beta$ for $1 \ll n \ll N$.
\end{enumerate}
\end{definition}

\begin{theorem}[Decoupling of dynamical and spectral exponents]
The exponents $\alpha$ and $\beta$ are generically independent. Their equality $\alpha = \beta$ holds if and only if the Mellin multiplier $\tilde{F}(\omega)$ is scale-free (power-law). For the Lorentzian multiplier of Proposition 2.5, $\beta$ is determined by $\sigma$ and $N$, independently of $\alpha$.
\end{theorem}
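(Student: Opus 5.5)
The plan is to reduce everything to the factorization of Theorem 1 and the Mellin diagonalization of Theorem 2. Write the truncation as $C_N = P K_N P$, where $P = \mathrm{diag}(t_i^{-\alpha/2})$ and $(K_N)_{ij} = F(t_i/t_j)$. The key structural choice is the sampling grid. I will take $t_i$ geometric, $t_i = t_1 r^{i-1}$, which is the grid adapted to the Haar measure $dt/t$. Then $K_N$ is a symmetric Toeplitz matrix in the index $i-j$, with symbol $\sum_m F(r^m) e^{-im\theta}$. This symbol is the periodization of $\tilde F$ sampled at frequencies $\omega = \theta/\ln r$. By Szeg\H{o}'s theorem, the eigenvalue distribution of $K_N$ as $N\to\infty$ is given by this symbol. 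Hence the ordered eigenvalue profile of $K_N$, and so the index $\beta_K$ defined by $\lambda_n \approx c' n^{-\beta_K}$, is a functional of $\tilde F$ alone.

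The envelope $P$ enters through conjugation and congruence. The dilation $k$ shifts the grid index, so the definition of $\alpha$ is invisible to the symbol. I would make this precise by comparing $\lambda_n(C_N)$ with $\lambda_n(K_N)$ via Ostrowski's theorem: $\lambda_n(PK_NP) = \theta_n \lambda_n(K_N)$ with $\theta_n \in [\min t_i^{-\alpha}, \max t_i^{-\alpha}]$. Ostrowski alone only gives a multiplicative factor that is itself geometric in $n$. So I would instead argue in the regime $1\ll n\ll N$ with the normalization implicit in the definition of $\beta$: the window on which the power law is read. Here I fix the observation window, so that the envelope contributes a bounded distortion.

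Independence then follows by exhibiting two families. Fix $F$ Lorentzian and vary $\alpha$: the symbol is unchanged, so $\beta$ depends only on $(\sigma,N)$, which is the last sentence of the theorem. Conversely, fix $\alpha$ and vary $\sigma$: $\beta$ moves while $\alpha$ does not. For the Lorentzian, sampling $\tilde F(\omega)=2c\sigma/(\sigma^2+\omega^2)$ on the $N$ Szeg\H{o} frequencies gives $\lambda_n \approx 2c\sigma/(\sigma^2 + (\pi n/(N\ln r))^2)$. This is about $n^{-2}$ for $n \gg N\sigma$ and flat for $n \ll N\sigma$, so the effective index over a fixed window is an explicit function of $\sigma N$.

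For the ``iff'', suppose $\tilde F(\omega)\propto |\omega|^{-\gamma}$. Then the sampled symbol gives $\lambda_n\propto n^{-\gamma}$, so $\beta=\gamma$. Moreover, a power-law multiplier corresponds to $F(u)\propto |\ln u|^{\gamma-1}$, and then the kernel is itself homogeneous in the log variable. Here I would identify $\gamma$ with $\alpha$ by requiring consistency of the two scalings: that the homogeneity degree of $F$ matches the envelope. Conversely, if $\alpha=\beta$ holds for all $N$ and windows, then the sampled profile is an exact power law at every scale, which forces $\tilde F$ to be homogeneous. The Lorentzian example shows this fails otherwise.

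The main obstacle is this equivalence. The definitions as stated do not tie the degree of $\tilde F$ to $\alpha$. The ``if'' direction requires adding the natural normalization that the scale-free multiplier's exponent is the one inherited from the envelope, for example by demanding that $C$ be the restriction of a single homogeneous object in both variables. I would first try to make that identification explicit. Failing that, I would prove the theorem as: $\beta$ is a functional of $\tilde F$ only, and $\alpha=\beta$ is a constraint linking the two that holds robustly (for all $N$) only when $\tilde F$ is a power law.
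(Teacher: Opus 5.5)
Your route is the paper's route made explicit. The paper's proof simply asserts, citing its reference [11], that $\lambda_n(C_N)$ approximates $\tilde F(\omega_n)$ at discrete Mellin frequencies, and it never mentions the envelope. You try to justify that assertion with a geometric grid and Szeg\H{o}'s theorem. The genuine gap is the envelope step, and that step does not merely lack proof: it fails. Szeg\H{o} with a fixed symbol needs a fixed ratio $r$, but then $P=\mathrm{diag}(t_i^{-\alpha/2})$ is geometrically graded. Apply Courant--Fischer after the substitution $y=Px$, with test spaces $\mathrm{span}(e_1,\dots,e_n)$ and $\mathrm{span}(e_n,\dots,e_N)$. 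For any increasing grid this gives
\[
\lambda_{\min}(K_N)\,t_n^{-\alpha}\;\le\;\lambda_n(C_N)\;\le\;\lambda_{\max}(K_N)\,t_n^{-\alpha}.
\]
For the Lorentzian on your grid, $K_N=c\,(b^{|i-j|})_{i,j}$ with $b=r^{-\sigma}$ is a Kac--Murdock--Szeg\H{o} matrix. Its spectrum lies in $c\,[\tanh(\tfrac{\sigma\ln r}{2}),\coth(\tfrac{\sigma\ln r}{2})]$ for every $N$. Hence $\lambda_n(C_N)=c\,t_1^{-\alpha}r^{-\alpha(n-1)}\theta_n$ with $\theta_n\in[\tanh(\tfrac{\sigma\ln r}{2}),\coth(\tfrac{\sigma\ln r}{2})]$. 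So the ordered spectrum of $C_N$ decays geometrically at a rate set by $\alpha$, and the Szeg\H{o} profile of $K_N$ is only an $N$-independent bounded correction. The claim ``$\beta$ is a functional of $\tilde F$ alone'' is therefore false for $C_N$ exactly in the regime where your Toeplitz argument operates. Fixing the window $[t_1,t_N]$ does make the envelope a bounded distortion. But then $r=(t_N/t_1)^{1/(N-1)}\to 1$, the symbol moves with $N$, and $K_N$ becomes a Nystr\"om discretisation of $c\,e^{-\sigma|s-s'|}$ on a log-interval of length $L=\ln(t_N/t_1)$. In that regime the crossover sits at $n\sim\sigma L$, independent of $N$, and the tail index is $2$ for every $\sigma$. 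The two halves of your argument therefore use incompatible limits; you would have to commit to one and redo the Lorentzian analysis there.

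On the equivalence, your diagnosis is right, and the paper does not close this gap either: its proof just writes $\beta=\gamma=\alpha$. By Theorem 1, $F$ and $\alpha$ are independent data. A power-law multiplier $|\omega|^{-\gamma}$ with $\gamma\neq\alpha$ then gives $\beta=\gamma\neq\alpha$ under the very sampling heuristic you and the paper use, which breaks the ``if'' direction. Conversely, the effective Lorentzian index varies continuously with $\sigma L$ and with the fitting window, so it can be tuned to equal a prescribed $\alpha$, which breaks ``only if''. Both directions thus need an extra hypothesis, which you only gesture at.

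The power-law case is also singular in your own framework. The periodised symbol $\frac{1}{\ln r}\sum_{k}\tilde F\bigl((\theta+2\pi k)/\ln r\bigr)$ diverges for $\gamma\le 1$; equivalently $F(1)=\infty$, so the diagonal of $C_N$ is undefined. For $\gamma\ge 1$ the multiplier is not locally integrable, so $F$ is not a function. ``$\lambda_n\propto n^{-\gamma}$ by Szeg\H{o}'' therefore needs a regularisation you have not specified. Your fallback (that $\beta$ depends only on $\tilde F$, and $\alpha=\beta$ holds robustly only for homogeneous $\tilde F$) is a different and weaker theorem. By the previous paragraph, even its first clause requires a regime in which the envelope is genuinely negligible.
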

\begin{proof}
The eigenvalue $\lambda_n$ of the finite truncation $C_N$ approximates $\tilde{F}(\omega_n)$ at the discrete Mellin frequency $\omega_n$ determined by the truncation boundary conditions (Section 7.3 of [11]). For the Lorentzian $\tilde{F}(\omega) = 2c\sigma/(\sigma^2 + \omega^2)$, the effective power-law decay $\lambda_n \sim n^{-\beta}$ has exponent $\beta$ determined by the width $\sigma$ and the matrix size $N$, independently of the envelope exponent $\alpha$. The decoupling $\alpha \neq \beta$ is therefore a structural consequence of the Lorentzian form of $\tilde{F}$, not an accident of parameter choice. When $\tilde{F}(\omega) \sim |\omega|^{-\gamma}$ (scale-free), $\beta = \gamma = \alpha$ and the two exponents coincide.
\end{proof}

\begin{remark}[Renormalization group interpretation]
In the RG framework [12–18], coarse-graining by a factor $k$ corresponds to the dilation $C(t, t') \mapsto C(kt, kt')$. The kernel $C$ is a fixed point of the temporal RG map $\mathcal{R}_k[C] = k^{\beta} C(kt, kt')$ if and only if $\alpha = \beta$. When $\alpha \neq \beta$, the dynamical and spectral scaling dimensions are decoupled, and $C$ flows under $\mathcal{R}_k$ rather than sitting at a fixed point. This is the spectral signature of temporal multicriticality: the coexistence of multiple independent scaling dimensions in the dynamical structure. The ratio $\alpha/\beta$ provides a quantitative measure of the degree of decoupling.
\end{remark}

\section{Preparatory Lemmas}
We establish two lemmas required for the proof of the main theorem.

\subsection{Sign of the cross-contraction}
\begin{lemma}[Negativity of the cross-contraction]
Let $\tilde{F}_\alpha$ and $\tilde{F}_\beta$ be two positive Lorentzian Mellin multipliers with widths $\sigma_\alpha, \sigma_\beta > 0$ and amplitudes $c_\alpha, c_\beta > 0$. The cross-contraction contribution to the Landau quartic coefficient is
\[
a^{(\alpha\beta)}_4 = -2 \int_{-\infty}^\infty \frac{\tilde{F}_\alpha(\omega) \tilde{F}_\beta(\omega)}{d\omega} \frac{d\omega}{2\pi} < 0.
\]
\end{lemma}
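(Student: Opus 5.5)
The plan is to read the integrand in the statement as the product $\tilde{F}_\alpha(\omega)\tilde{F}_\beta(\omega)$ against the spectral measure $d\omega/2\pi$. The stray ``$/d\omega$'' looks like a typographical artifact, since the natural pairing of two Mellin multipliers on $L^2(\R, d\omega/2\pi)$ is $\int \tilde{F}_\alpha \tilde{F}_\beta\, d\omega/2\pi$. So the quantity to control is
\[
a_4^{(\alpha\beta)} = -2\int_{-\infty}^{\infty} \tilde{F}_\alpha(\omega)\tilde{F}_\beta(\omega)\,\frac{d\omega}{2\pi}.
\]
I take the prefactor $-2$ as part of the definition of the cross-contraction: it comes from the two mixed Wick pairings in the quartic term, with the sign fixed by the Landau expansion. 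The task is then to show this expression is finite and strictly negative, and to evaluate it in closed form so that it matches the $-2\lambda pq$ term in (2).

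\textbf{Negativity.} By Proposition 2.5 each multiplier has the form $\tilde{F}_j(\omega) = 2c_j\sigma_j/(\sigma_j^2+\omega^2)$ with $c_j, \sigma_j > 0$. Hence both multipliers are strictly positive and continuous on $\R$, and each decays like $\omega^{-2}$. Their product is therefore positive and integrable, since it is $O(\omega^{-4})$ at infinity. The integral is thus a finite positive number, and multiplying by $-2$ gives strict negativity.

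\textbf{Closed form.} I will compute
\[
\int_{-\infty}^{\infty}\frac{d\omega}{(\sigma_\alpha^2+\omega^2)(\sigma_\beta^2+\omega^2)} = \frac{\pi}{\sigma_\alpha\sigma_\beta(\sigma_\alpha+\sigma_\beta)}.
\]
This follows either by closing the contour in the upper half plane and summing the residues at $i\sigma_\alpha$ and $i\sigma_\beta$, or by partial fractions (for $\sigma_\alpha \ne \sigma_\beta$, with the equal case recovered by continuity). Multiplying by $4c_\alpha c_\beta\sigma_\alpha\sigma_\beta/(2\pi)$ and then by $-2$ gives
\[
a_4^{(\alpha\beta)} = -\frac{4c_\alpha c_\beta}{\sigma_\alpha+\sigma_\beta}.
\]
Writing $p = c_\alpha/\sqrt{\sigma_\alpha}$ and $q = c_\beta/\sqrt{\sigma_\beta}$, we have
\[
\frac{4c_\alpha c_\beta}{\sigma_\alpha+\sigma_\beta} = 2\cdot\frac{2\sqrt{\sigma_\alpha\sigma_\beta}}{\sigma_\alpha+\sigma_\beta}\cdot pq,
\]
so $a_4^{(\alpha\beta)} = -2\lambda pq$. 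Finally, $\lambda \le 1$ by AM--GM, with equality exactly when $\sigma_\alpha = \sigma_\beta$.

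\textbf{Main obstacle.} The integral itself is routine. The real difficulty is justifying that the cross-contraction enters with exactly the coefficient $-2\int\tilde{F}_\alpha\tilde{F}_\beta\,d\omega/2\pi$. That requires expanding the quartic term of the effective action in the two modes and using Theorem 2.4 (Mellin diagonalization) to convert the $t$-space contraction of the two kernels into a product of multipliers. The step I would scrutinize there is that the envelopes $\Pi_\alpha$ and $\Pi_\beta$ cancel, or are absorbed, in the pairing. If they do not, the frequency argument of one multiplier would be shifted by an imaginary amount $\pm i(\alpha-\beta)/2$. In that case I would redo the residue computation with shifted poles and check that positivity survives whenever the shift is smaller than the widths.
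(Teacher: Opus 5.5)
Your proposal is correct and is essentially the paper's argument. The paper gets the factor $-2$ as the cross term of its one-loop expression $-\int(\tilde{F}_\alpha+\tilde{F}_\beta)^2\,d\omega/2\pi$, which is the same as your two mixed pairings, and then concludes strict negativity from pointwise positivity of the two Lorentzians exactly as you do; your closed form $-4c_\alpha c_\beta/(\sigma_\alpha+\sigma_\beta)=-2\lambda pq$ is its equation (7).

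Your caution about the coefficient is well placed. The paper simply posits $\tilde{C}=\tilde{F}_\alpha+\tilde{F}_\beta$ without addressing the differing envelopes, and its one-loop formula taken literally would also make the diagonal terms negative, contrary to (5)--(6). So treating the $-2$ as given, as you do, is as much as the paper's own proof supports.
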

\begin{proof}
The total two-point Mellin multiplier when both scaling operators are present is $\tilde{C}(\omega) = \tilde{F}_\alpha(\omega) + \tilde{F}_\beta(\omega)$. The one-loop contribution to $a_4$ from the fluctuation propagator $G(\omega) = \tilde{C}(\omega)$ at the critical point is:
\[
\delta a_4 = - \int_{-\infty}^\infty \frac{G(\omega)^2}{d\omega} \frac{d\omega}{2\pi} = - \int_{-\infty}^\infty \left[ \tilde{F}_\alpha(\omega) + \tilde{F}_\beta(\omega) \right]^2 \frac{d\omega}{2\pi}.
\]
Expanding the square, the cross-term is $-2 \int \tilde{F}_\alpha(\omega)\tilde{F}_\beta(\omega) \frac{d\omega}{2\pi}$. Since $\tilde{F}_\alpha(\omega) = 2c_\alpha\sigma_\alpha/(\sigma_\alpha^2 + \omega^2) > 0$ and $\tilde{F}_\beta(\omega) = 2c_\beta\sigma_\beta/(\sigma_\beta^2 + \omega^2) > 0$ pointwise, their product is strictly positive and the cross-contraction is strictly negative. The sign is a consequence of the positivity of the Lorentzian spectral densities alone; it does not depend on $\alpha \neq \beta$.
\end{proof}

\subsection{Three-point factorization under tMSI}
\begin{lemma}[Three-point factorization]
Let $O_\alpha, O_\beta$ be scaling operators whose two-point functions satisfy tMSI with exponents $\alpha, \beta$ and Mellin multipliers $\tilde{F}_\alpha, \tilde{F}_\beta$ respectively. If the three-point function $\langle O_\alpha(t_1)O_\alpha(t_2)O_\beta(t_3) \rangle$ itself satisfies tMSI with combined exponent $(2\alpha + \beta)/2$, then in the Mellin domain:
\[
G_3(\omega_1, \omega_2, \omega_3) = g_{\alpha\alpha\beta} \tilde{F}_\alpha(\omega_1) \tilde{F}_\alpha(\omega_2) \tilde{F}_\beta(\omega_3) \delta(\omega_1 + \omega_2 + \omega_3), \tag{4}
\]
where $g_{\alpha\alpha\beta}$ is a structure constant.
\end{lemma}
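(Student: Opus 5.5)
The plan is to reduce the three-point function to a homogeneous function of two log-time differences. Then apply the Mellin transform in each time variable, so that the dilation invariance produces the delta function. Finally, read off the remaining two-variable multiplier and argue that it factors into the two-point multipliers times a constant.

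\textbf{Step 1 (homogeneity reduction).} I would copy the necessity argument of Theorem 1 (kernel factorization). Write
\[
G_3(t_1,t_2,t_3)=\langle O_\alpha(t_1)O_\alpha(t_2)O_\beta(t_3)\rangle .
\]
Since $G_3(kt_1,kt_2,kt_3)=k^{-(2\alpha+\beta)/2}G_3(t_1,t_2,t_3)$, setting $k=1/t_3$ gives
\[
G_3=t_3^{-(2\alpha+\beta)/2}\,\Phi(t_1/t_3,\,t_2/t_3).
\]
Redistributing the envelope symmetrically, in the spirit of (3), the prefactors $t_1^{-\alpha/2}t_2^{-\alpha/2}t_3^{-\beta/2}$ can be stripped off. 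This is the three-point analogue of the conjugation $C=\Pi_\alpha K\Pi_\alpha$ in Theorem 2. What remains is a function $H(s_1-s_3,\,s_2-s_3)$ of log-times $s_j=\ln t_j$ only.

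\textbf{Step 2 (Mellin transform and the delta).} I would take the Mellin transform of the stripped kernel in each variable with respect to $dt_j/t_j$. In log-time this is a three-dimensional Fourier transform of a function invariant under the simultaneous shift $s_j\mapsto s_j+\ln k$. The standard translation-invariance argument then applies: $\widehat{D_k f}(\omega)=k^{i\omega}\hat f(\omega)$ applied jointly forces the transform to be supported on $\omega_1+\omega_2+\omega_3=0$. Hence
\[
G_3(\omega_1,\omega_2,\omega_3)=2\pi\,\tilde H(\omega_1,\omega_2)\,\delta(\omega_1+\omega_2+\omega_3),
\]
with $\tilde H$ the two-dimensional Fourier transform of $H$. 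I would treat this in the distributional sense, as in Theorem 2.

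\textbf{Step 3 (factorization of the vertex).} This is the main obstacle, since homogeneity alone only yields an arbitrary $\tilde H$. I would interpret the three-point function as an amputated vertex dressed with three propagators. Each external leg carries the two-point propagator of its operator, whose Mellin multiplier is $\tilde F_\alpha$ or $\tilde F_\beta$ by Theorem 2. One then defines $g_{\alpha\alpha\beta}$ as the amputated vertex
\[
g_{\alpha\alpha\beta}=\frac{\tilde H(\omega_1,\omega_2)}{\tilde F_\alpha(\omega_1)\tilde F_\alpha(\omega_2)\tilde F_\beta(-\omega_1-\omega_2)} .
\]
It remains to show this ratio is constant. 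First I would try to argue that, at the level of Landau theory (tree level, local cubic coupling), the vertex is momentum-independent. Any $\omega$-dependence of the amputated vertex would introduce an additional scale. At criticality that scale must flow to zero, since the Lorentzian widths are the only scales present. Failing a rigorous argument, I would state the constancy of the amputated vertex as the defining assumption of a local scaling operator product. Equation (4) then holds with $g_{\alpha\alpha\beta}$ being exactly the leading coefficient of the $O_\alpha O_\alpha\to O_\beta$ fusion.
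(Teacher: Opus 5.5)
Your Steps 1 and 2 are correct, and they are cleaner than the paper's corresponding step. The paper applies the dilation directly to the Mellin transform and writes $G_3 = k^{-(2\alpha+\beta)/2 + i(\omega_1+\omega_2+\omega_3)}G_3$, which read literally forces $G_3\equiv 0$. Stripping the envelope $t_1^{-\alpha/2}t_2^{-\alpha/2}t_3^{-\beta/2}$ first, as you do, correctly yields $2\pi\tilde H(\omega_1,\omega_2)\,\delta(\omega_1+\omega_2+\omega_3)$ with $\tilde H$ arbitrary.

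The genuine gap is Step 3, and it cannot be closed from the stated hypotheses, because the statement is false without an extra assumption. Here is a counterexample:
\begin{itemize}
\item Let $Y$ be the stationary Gaussian (Ornstein--Uhlenbeck) process on $\R$ with covariance $R(s)=c\,e^{-\sigma|s|}$.
\item Set $O_\alpha(t)=t^{-\alpha/2}Y(\ln t)$ and $O_\beta(t)=t^{-\beta/2}\bigl(Y(\ln t)^2-c\bigr)$.
\item Then $O_\alpha$ has shape function $c\rho^{|\ln u|}$, so $\tilde F_\alpha=2c\sigma/(\sigma^2+\omega^2)$.
\item By Isserlis' theorem, $O_\beta$ has shape function $2c^2\rho^{2|\ln u|}$, so $\tilde F_\beta=8c^2\sigma/(4\sigma^2+\omega^2)$.
\item Also by Isserlis, $\langle O_\alpha(t_1)O_\alpha(t_2)O_\beta(t_3)\rangle=2(t_1t_2)^{-\alpha/2}t_3^{-\beta/2}R(\ln(t_1/t_3))R(\ln(t_2/t_3))$, which is homogeneous of degree $-(2\alpha+\beta)/2$.
\end{itemize}
Every hypothesis holds, with the paper's own Lorentzians. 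Yet $\tilde H=2\tilde F_\alpha(\omega_1)\tilde F_\alpha(\omega_2)$, so your amputated vertex is $2/\tilde F_\beta(\omega_1+\omega_2)=(4\sigma^2+(\omega_1+\omega_2)^2)/(4c^2\sigma)$, which is not constant.

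This also shows why your heuristic fails. The $\omega$-dependence enters only through $\omega/\sigma$, using the Lorentzian width already present. Since tMSI is just translation invariance in $\ln t$, it places no restriction on log-time scales; the Lorentzian $\rho^{|\ln u|}$ is itself such a scale. So there is no reason for that dependence to ``flow to zero''.

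The paper's proof does not supply the missing idea either. After momentum conservation it invokes a ``reduction condition'': integrating out $t_3$ should return $\tilde F_\alpha(\omega_1)\delta(\omega_1+\omega_2)$, with analogues for $t_1$ and $t_2$, and it claims these fix $G_3$ up to $g_{\alpha\alpha\beta}$. This fails on three counts:
\begin{itemize}
\item The condition is not implied by the hypotheses: a three-point function integrated over one argument is not a two-point function.
\item Even if granted, it only constrains $\tilde H$ on three lines of the $(\omega_1,\omega_2)$-plane, so it cannot determine a function of two variables.
\item It is incompatible with (4) itself. Integrating (4) over $\omega_3$ leaves a smooth function with no $\delta(\omega_1+\omega_2)$. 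Setting $\omega_3=0$ gives $g_{\alpha\alpha\beta}\tilde F_\beta(0)\tilde F_\alpha(\omega_1)^2\delta(\omega_1+\omega_2)$, which equals $\tilde F_\alpha(\omega_1)\delta(\omega_1+\omega_2)$ only if $\tilde F_\alpha$ is constant.
\end{itemize}

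The honest version of the lemma is therefore the one your fallback describes. Add as a hypothesis that $O_\beta$ couples to $O_\alpha O_\alpha$ through a local (in $\ln t$), tree-level cubic vertex, i.e.\ that the amputated vertex is constant. Under that hypothesis your Steps 1--3 give (4) directly, with the factor $2\pi$ absorbed into $g_{\alpha\alpha\beta}$. Presented as a derivation from the stated assumptions, however, Step 3 is circular.
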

\begin{proof}
Taking the Mellin transform of the three-point function in all three arguments and applying the tMSI scaling condition with dilation $k = e^s$:
$G_3(\omega_1, \omega_2, \omega_3) = k^{-(2\alpha+\beta)/2 + i(\omega_1+\omega_2+\omega_3)} G_3(\omega_1, \omega_2, \omega_3)$.
For this to hold for all $k > 0$ (equivalently all $s \in \R$), we need:
\begin{enumerate}
    \item $\omega_1 + \omega_2 + \omega_3 = 0$ (Mellin momentum conservation), giving the delta function $\delta(\omega_1 + \omega_2 + \omega_3)$;
    \item the overall scaling dimension is $(2\alpha + \beta)/2$.
\end{enumerate}
Given (a), the residual dependence on $\omega_1, \omega_2$ must be consistent with the reduction condition: integrating out $t_3$ (setting $\omega_3 = 0$) must recover the two-point function of $O_\alpha$: $\int G_3(\omega_1, \omega_2, \omega_3) d\omega_3 = G^{(\alpha)}_2(\omega_1) \delta(\omega_1 + \omega_2)$, which gives $\tilde{F}_\alpha(\omega_1)\delta(\omega_1+\omega_2)$. Together with momentum conservation and the analogous condition obtained by integrating out $t_1$ or $t_2$, the three-point function is uniquely determined up to the structure constant $g_{\alpha\alpha\beta}$, yielding (4).
\end{proof}

\section{Classification of Tipping Points}

\subsection{The Landau quartic coefficient}
We now compute $a_4$ from the tMSI structure. When two independent scaling operators $O_\alpha, O_\beta$ are present, the two-point Mellin multiplier is $\tilde{C}(\omega) = \tilde{F}_\alpha(\omega) + \tilde{F}_\beta(\omega)$, and the Landau quartic coefficient receives three contributions:
\[
a^{(\alpha\alpha)}_4 = \int_{-\infty}^\infty \frac{\tilde{F}_\alpha(\omega)^2}{d\omega} \frac{d\omega}{2\pi} = \frac{c_\alpha^2}{\sigma_\alpha}, \tag{5}
\]
\[
a^{(\beta\beta)}_4 = \int_{-\infty}^\infty \frac{\tilde{F}_\beta(\omega)^2}{d\omega} \frac{d\omega}{2\pi} = \frac{c_\beta^2}{\sigma_\beta}, \tag{6}
\]
\[
a^{(\alpha\beta)}_4 = -2 \int_{-\infty}^\infty \frac{\tilde{F}_\alpha(\omega) \tilde{F}_\beta(\omega)}{d\omega} \frac{d\omega}{2\pi} = - \frac{4c_\alpha c_\beta}{\sigma_\alpha + \sigma_\beta}, \tag{7}
\]
where (7) follows from Lemma 4.1 and the standard convolution integral $\int_{-\infty}^\infty d\omega/[(\sigma_\alpha^2+\omega^2)(\sigma_\beta^2+\omega^2)] = \pi/[\sigma_\alpha\sigma_\beta(\sigma_\alpha+\sigma_\beta)]$.

The three-point mixing contributes an additional negative term. By Lemma 4.2, the three-point structure constant $g_{\alpha\alpha\beta}$ enters the quartic coefficient through:
\[
a^{(3pt)}_4 = -g^2_{\alpha\alpha\beta} \Gamma(\sigma_\alpha, \sigma_\beta),
\]
where
\[
\Gamma(\sigma_\alpha, \sigma_\beta) = \int_{-\infty}^\infty \int_{-\infty}^\infty \frac{\tilde{F}_\alpha(\omega)^2 \tilde{F}_\alpha(\omega+\omega')^2 \tilde{F}_\beta(\omega')^2}{(2\pi)^2} d\omega d\omega'. \tag{8}
\]

\subsection{Closed form of $\Gamma$}
\begin{proposition}[Explicit $\Gamma$]
For Lorentzian multipliers with parameters $(c_\alpha, \sigma_\alpha)$ and $(c_\beta, \sigma_\beta)$, the integral (8) evaluates to
\[
\Gamma(\sigma_\alpha, \sigma_\beta) = \frac{32\pi c_\alpha^4 c_\beta^2 (10\sigma_\alpha^3 + 20\sigma_\alpha^2\sigma_\beta + 8\sigma_\alpha\sigma_\beta^2 + \sigma_\beta^3)}{\sigma_\alpha^2 \sigma_\beta (2\sigma_\alpha + \sigma_\beta)^4}. \tag{9}
\]
In particular, $\Gamma(\sigma_\alpha, \sigma_\beta) > 0$ for all $\sigma_\alpha, \sigma_\beta, c_\alpha, c_\beta > 0$.
\end{proposition}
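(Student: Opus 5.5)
The plan is to evaluate (8) by moving to the log-time side, where each Lorentzian becomes an exponential. The inner $\omega$-integral is a correlation of $\tilde{F}_\alpha^2$ with a shifted copy of itself, and Parseval turns it into an integral of elementary exponential-polynomials. The positivity claim needs no computation: the integrand in (8) is a product of squares of the strictly positive functions $\tilde{F}_\alpha,\tilde{F}_\beta$ from Proposition 2.5, so $\Gamma>0$ whenever the integral converges. Convergence is clear because the integrand decays like $|\omega|^{-4}|\omega+\omega'|^{-4}|\omega'|^{-4}$. The closed form (9) also has a manifestly positive numerator and denominator, which gives a consistency check.

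\textbf{Step 1.} By the proof of Proposition 2.5, $\tilde{F}_\alpha(\omega) = c_\alpha\,\widehat{e_\alpha}(\omega)$ with $e_\alpha(s)=e^{-\sigma_\alpha|s|}$. Hence $\tilde{F}_\alpha^2 = c_\alpha^2\,\widehat{h_\alpha}$, where $h_\alpha = e_\alpha * e_\alpha$. A direct computation gives
\[
h_\alpha(s)=\frac{(1+\sigma_\alpha|s|)\,e^{-\sigma_\alpha|s|}}{\sigma_\alpha},
\]
and similarly for $\beta$.

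\textbf{Step 2.} Since $\tilde{F}_\alpha^2$ is real and even, Parseval/convolution gives
\[
\int \tilde{F}_\alpha(\omega)^2\tilde{F}_\alpha(\omega+\omega')^2\,\frac{d\omega}{2\pi} = c_\alpha^4 \int h_\alpha(s)^2 e^{i\omega' s}\,ds.
\]
So the inner integral is $c_\alpha^4$ times the Fourier transform of
\[
h_\alpha^2(s)=\sigma_\alpha^{-2}(1+\sigma_\alpha|s|)^2e^{-2\sigma_\alpha|s|}.
\]

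\textbf{Step 3.} Apply Parseval once more to the $\omega'$-integral against $\tilde{F}_\beta(\omega')^2 = c_\beta^2\widehat{h_\beta}(\omega')$. This reduces $\Gamma$ to a constant times
\[
c_\alpha^4c_\beta^2\int_{-\infty}^{\infty} h_\alpha(s)^2\,h_\beta(s)\,ds = \frac{2c_\alpha^4c_\beta^2}{\sigma_\alpha^2\sigma_\beta}\int_0^\infty (1+\sigma_\alpha s)^2(1+\sigma_\beta s)\,e^{-(2\sigma_\alpha+\sigma_\beta)s}\,ds .
\]
Expanding the cubic polynomial and using $\int_0^\infty s^n e^{-\mu s}ds = n!/\mu^{n+1}$ with $\mu=2\sigma_\alpha+\sigma_\beta$ gives four terms. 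Over the common denominator $\mu^4$, they yield a cubic numerator in $(\sigma_\alpha,\sigma_\beta)$ and the prefactor $\sigma_\alpha^{-2}\sigma_\beta^{-1}$. This is exactly the structure of (9).

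\textbf{Main obstacle.} The delicate part is the bookkeeping: the powers of $2\pi$ from the normalization $d\omega/2\pi$ in (8) and from the Fourier convention in Proposition 2.5, and the expansion of the numerator. I would fix the convention $\hat{f}(\omega)=\int f(s)e^{-i\omega s}ds$ and track each factor explicitly. I would then check the result independently by a second route: evaluate the one-dimensional integral in $\omega$ by residues at the double poles $\omega=\pm i\sigma_\alpha$ and $\omega=-\omega'\pm i\sigma_\alpha$, then integrate the resulting rational function of $\omega'$ against $\tilde{F}_\beta^2$ by residues. As further checks I would test the special case $\sigma_\alpha=\sigma_\beta$ and the scaling behaviour: under $\sigma\mapsto k\sigma$ the integral must scale like $k^{-3}$, which matches (9). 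If the numerical constant or the coefficients $(10,20,8,1)$ disagree with this computation, I would report the corrected constant. Positivity, and hence the statement $\Gamma>0$ used in the classification, is unaffected either way.
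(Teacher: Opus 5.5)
Your route differs from the paper's and is sound. The paper applies Parseval only to the inner $\omega$-integral and then evaluates the outer integral $J$ by residues with SymPy. You push both integrals to log-time, which reduces $\Gamma$ to an elementary one-dimensional integral, makes every constant checkable by hand, and gives $\Gamma>0$ directly from the integrand rather than from a closed form.

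The gap is that you stop at ``a constant times'' and ``exactly the structure of (9)''. The statement is a specific formula, so the prefactor is the content, and finishing your computation does not give (9). With $\hat f(\omega)=\int f(s)e^{-i\omega s}\,ds$, each $d\omega/2\pi$ in (8) is absorbed exactly by one Parseval identity, so the constant in Step 3 is $1$. With $\mu=2\sigma_\alpha+\sigma_\beta$,
\[
\Gamma=c_\alpha^4c_\beta^2\int_{-\infty}^{\infty}h_\alpha^2h_\beta\,ds
=\frac{2c_\alpha^4c_\beta^2}{\sigma_\alpha^2\sigma_\beta}\Bigl[\frac{2}{\mu}+\frac{2\sigma_\alpha(\sigma_\alpha+2\sigma_\beta)}{\mu^3}+\frac{6\sigma_\alpha^2\sigma_\beta}{\mu^4}\Bigr]
=\frac{4c_\alpha^4c_\beta^2(10\sigma_\alpha^3+20\sigma_\alpha^2\sigma_\beta+8\sigma_\alpha\sigma_\beta^2+\sigma_\beta^3)}{\sigma_\alpha^2\sigma_\beta(2\sigma_\alpha+\sigma_\beta)^4}.
\]
The polynomial and the denominator in (9) are right, but the prefactor is $4$, not $32\pi$. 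So (9) overstates $\Gamma$ by a factor $8\pi$ and cannot be proved as written. For instance, at $\sigma_\alpha=\sigma_\beta=c_\alpha=c_\beta=1$ one gets $\Gamma=2\int_0^\infty(1+s)^3e^{-3s}\,ds=52/27$, while (9) gives $416\pi/27$.

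The paper's proof loses this factor in two places:
\begin{itemize}
\item Its inner integral is four times too large. Your Step 2, evaluated, gives $8c_\alpha^4\sigma_\alpha(20\sigma_\alpha^2+\omega'^2)/(4\sigma_\alpha^2+\omega'^2)^3$, which at $\omega'=0$ agrees with $\int\tilde F_\alpha^4\,d\omega/2\pi=5c_\alpha^4/(2\sigma_\alpha^3)$.
\item The closed form (10) is the value of the $J$-integral with measure $d\omega$ rather than $d\omega/2\pi$. At $\sigma_\alpha=\sigma_\beta=1$ the $d\omega$-integral equals $13\pi/108$.
\end{itemize}

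Two of your proposed checks also need correcting:
\begin{itemize}
\item $\Gamma$ is homogeneous of degree $-4$ in $(\sigma_\alpha,\sigma_\beta)$, not $-3$. Rescaling $\sigma\mapsto k\sigma$ and $(\omega,\omega')\mapsto k(\omega,\omega')$ multiplies each of the six Lorentzian factors by $k^{-1}$ and the measure by $k^2$. Since (9) is also of degree $-4$, your check as stated would have flagged a spurious mismatch.
\item The correct equal-width value is $\Gamma(\sigma,\sigma)=52c_\alpha^4c_\beta^2/(27\sigma^4)$. The paper's Remark, with $\sigma^{-5}$, is inconsistent even with (9).
\end{itemize}

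Positivity and the qualitative classification survive. The only downstream change is that $g_c^2$ in (12) is multiplied by $8\pi$.
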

\begin{proof}
Evaluating the inner $\omega$-integral by Parseval’s theorem (using the Fourier transform pair $1/(\sigma^2 + \omega^2)^2 \leftrightarrow (\pi/2\sigma^3)(1 + \sigma|t|)e^{-\sigma|t|}$) yields
\[
\int_{-\infty}^\infty \frac{\tilde{F}_\alpha(\omega)^2 \tilde{F}_\alpha(\omega + \omega')^2}{2\pi} d\omega = \frac{16c_\alpha^4\sigma_\alpha^4}{\sigma_\alpha^3} \cdot \frac{2(20\sigma_\alpha^2 + \omega'^2)}{(4\sigma_\alpha^2 + \omega'^2)^3}.
\]
The outer $\omega'$-integral
\[
J(\sigma_\alpha, \sigma_\beta) = \int_{-\infty}^\infty \frac{20\sigma_\alpha^2 + \omega^2}{(4\sigma_\alpha^2 + \omega^2)^3(\sigma_\beta^2 + \omega^2)^2} \frac{d\omega}{2\pi}
\]
is evaluated by residues. The computation was performed and verified with the computer algebra system SymPy [19], yielding the closed form
\[
J(\sigma_\alpha, \sigma_\beta) = \frac{\pi (10\sigma_\alpha^3 + 20\sigma_\alpha^2\sigma_\beta + 8\sigma_\alpha\sigma_\beta^2 + \sigma_\beta^3)}{4\sigma_\alpha^3 \sigma_\beta^3 (2\sigma_\alpha + \sigma_\beta)^4}. \tag{10}
\]
Positivity of $J$ is immediate from inspection: every factor in numerator and denominator is positive for $\sigma_\alpha, \sigma_\beta > 0$. Substituting into $\Gamma = 128c_\alpha^4 c_\beta^2 \sigma_\alpha \sigma_\beta^2 \cdot J(\sigma_\alpha, \sigma_\beta)$ gives (9).
\end{proof}

\begin{remark}
At $\sigma_\alpha = \sigma_\beta = \sigma$: $J(\sigma, \sigma) = 13\pi/(108\sigma^7)$, giving $\Gamma(\sigma, \sigma) = 32\pi c_\alpha^4 c_\beta^2 \cdot 13/(108\sigma^5) > 0$. This confirms that $\Gamma$ is strictly positive at the critical boundary $\alpha = \beta$ and, by continuity, in a neighborhood of it.
\end{remark}

\subsection{Main theorem}
Define $p = c_\alpha/\sqrt{\sigma_\alpha}$, $q = c_\beta/\sqrt{\sigma_\beta}$, and the geometric-harmonic ratio
\[
\lambda = \frac{2\sqrt{\sigma_\alpha\sigma_\beta}}{\sigma_\alpha + \sigma_\beta} \in (0, 1],
\]
with $\lambda = 1$ if and only if $\sigma_\alpha = \sigma_\beta$ (equivalently $\alpha = \beta$). Assembling equations (5)–(7):
\[
a^{(\alpha\alpha)}_4 + a^{(\alpha\beta)}_4 + a^{(\beta\beta)}_4 = p^2 + q^2 - 2\lambda p q.
\]

\begin{lemma}[Non-negativity of the two-operator base]
For all $p, q > 0$ and $\lambda \in (0, 1]$:
\[
p^2 + q^2 - 2\lambda p q \ge 0,
\]
with equality if and only if $p = q$ and $\lambda = 1$.
\end{lemma}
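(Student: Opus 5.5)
The plan is to rewrite the quadratic form as a perfect square plus a manifestly non-negative remainder, and then read off both the inequality and the equality case from that decomposition. Concretely, I will use the identity
\[
p^2 + q^2 - 2\lambda p q = (p - q)^2 + 2(1-\lambda)\, p q,
\]
which follows by expanding $(p-q)^2 = p^2 + q^2 - 2pq$ and collecting the $pq$ terms. Since $\lambda \in (0,1]$, the factor $1-\lambda$ is non-negative, and $pq > 0$ by hypothesis. Both summands are therefore non-negative, which gives $p^2 + q^2 - 2\lambda pq \ge 0$.

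For the equality case I will argue in both directions. If $p = q$ and $\lambda = 1$, both summands vanish, so the expression is zero. Conversely, suppose the sum is zero. Both summands are non-negative, so each must vanish separately. The first forces $p = q$. The second gives $2(1-\lambda)pq = 0$; since $p, q > 0$ strictly, this forces $\lambda = 1$. The strict positivity of $p$ and $q$ (which holds because $c_\alpha, c_\beta, \sigma_\alpha, \sigma_\beta > 0$) is exactly what rules out the degenerate alternative $pq = 0$. It is also why the lemma is stated for $p, q > 0$ rather than $p, q \ge 0$.

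A complementary check is to view the expression as the quadratic form of the matrix $\begin{pmatrix} 1 & -\lambda \\ -\lambda & 1 \end{pmatrix}$. This matrix has eigenvalues $1 \pm \lambda$, so it is positive semidefinite for $\lambda \le 1$. When $\lambda = 1$ its kernel is spanned by $(1,1)$, which recovers the equality condition $p = q$. I would mention this only as a remark.

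I do not expect a real obstacle here. The only point needing care is that $\lambda \le 1$ is taken from its definition as a geometric-to-arithmetic mean ratio, i.e.\ the AM--GM inequality $2\sqrt{\sigma_\alpha\sigma_\beta} \le \sigma_\alpha + \sigma_\beta$. Equality in AM--GM holds exactly when $\sigma_\alpha = \sigma_\beta$, which matches the stated characterization of $\lambda = 1$. Citing this justifies the hypothesis $\lambda \in (0,1]$ used in the identity above.
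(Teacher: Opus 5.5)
Your proof is correct, but it takes a different route from the paper. The paper dehomogenizes, writing the expression as $q^2(x^2-2\lambda x+1)$ with $x=p/q$. It then reads off both claims from the discriminant $4(\lambda^2-1)\le 0$: there are no real roots when $\lambda<1$, and a double root at $x=1$ when $\lambda=1$. You instead complete the square, $p^2+q^2-2\lambda pq=(p-q)^2+2(1-\lambda)pq$, so the inequality and the equality case both come from two separately non-negative terms vanishing.

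The discriminant argument never uses the sign of $p$ or $q$ (only $q\neq 0$). It therefore proves non-negativity for all real $p,q$, i.e.\ positive semidefiniteness of the form, which your eigenvalue remark ($1\pm\lambda$) also captures. Your main argument, by contrast, spends the hypothesis $pq>0$ already on the inequality itself, not only on the equality case.

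In exchange, your identity is quantitative: it separates the two independent ways the base can become small. This is what the main theorem actually relies on. In case (i), the leftover positive part at $\lambda=1$ is exactly $(p-q)^2$. And $g_c^2\to 0$ as $\sigma_\alpha\to\sigma_\beta$ requires $p\to q$ as well as $\lambda\to 1$, since only the term $2(1-\lambda)pq$ is forced to vanish. The paper records that condition only parenthetically.

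Your closing AM--GM remark is harmless but not needed, since $\lambda\in(0,1]$ is a hypothesis of the lemma.
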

\begin{proof}
Write $p^2 + q^2 - 2\lambda p q = q^2[x^2 - 2\lambda x + 1]$ where $x = p/q$. The discriminant of $x^2 - 2\lambda x + 1$ is $4\lambda^2 - 4 = 4(\lambda^2 - 1) \le 0$, so the quadratic has no real roots for $\lambda < 1$ and a double root at $x = 1$ for $\lambda = 1$. Hence $(p^2 + q^2 - 2\lambda p q) \ge 0$ always, with equality only when $x = 1$ and $\lambda = 1$.
\end{proof}

We can now state and prove the main result.

\begin{theorem}[Tipping point classification]
Let $X(t) \in \R^N$ be a multivariate observable whose two-time correlation kernel $C(t, t')$ satisfies tMSI of order $\alpha$ at a critical parameter value $K_c$, with the system possessing exact $U(1)$ phase symmetry. Let $\beta$ be the spectral relaxation exponent (Definition 3.1), $\sigma_\alpha, \sigma_\beta$ the Lorentzian widths of the two scaling operators, $c_\alpha, c_\beta$ their amplitudes, and $g_{\alpha\alpha\beta}$ the three-point structure constant (Lemma 4.2). Define $p, q, \lambda$, and $\Gamma(\sigma_\alpha, \sigma_\beta)$ as above. The Landau quartic coefficient of the order parameter is
\[
a_4 = p^2 + q^2 - 2\lambda p q - g^2_{\alpha\alpha\beta} \Gamma(\sigma_\alpha, \sigma_\beta), \tag{11}
\]
and the tipping point at $K_c$ is classified as follows.
\begin{enumerate}
    \item \textbf{(Simple critical point, $\alpha = \beta$).} $\lambda = 1$ and the positive part of $a_4$ vanishes: $(p^2 + q^2 - 2\lambda p q) = 0$ when $p = q$. Hence $a_4 = -g^2_{\alpha\alpha\beta}\Gamma < 0$ for any $g_{\alpha\alpha\beta} \neq 0$. The tipping point is discontinuous and hysteretic. The simple critical point is maximally fragile: any nonzero operator mixing drives the transition catastrophic.
    \item \textbf{(Multicritical, $\alpha \neq \beta$, $g_{\alpha\alpha\beta} < g_c$).} $a_4 > 0$; the tipping point is continuous and recoverable.
    \item \textbf{(Tricritical boundary, $\alpha \neq \beta$, $g_{\alpha\alpha\beta} = g_c$).} $a_4 = 0$; the system is at a tricritical point, and stability requires $a_6 > 0$.
    \item \textbf{(Multicritical, $\alpha \neq \beta$, $g_{\alpha\alpha\beta} > g_c$).} $a_4 < 0$; the tipping point is discontinuous and hysteretic.
\end{enumerate}
The critical structure constant separating cases (ii) and (iv) is
\[
g^2_c (\alpha, \beta) = \frac{p^2 + q^2 - 2\lambda p q}{\Gamma(\sigma_\alpha, \sigma_\beta)}, \tag{12}
\]
which is strictly positive when $\alpha \neq \beta$ (by Lemma 5.3 and $\Gamma > 0$) and vanishes as $\alpha \to \beta$.
\end{theorem}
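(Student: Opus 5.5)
The plan is to assemble formula (11) from the contributions already computed and then read off the four cases from the sign of $a_4$, using Lemma 5.3 and Proposition 5.1.

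\emph{Step 1: the positive part.} Summing (5)--(7) gives $c_\alpha^2/\sigma_\alpha + c_\beta^2/\sigma_\beta - 4c_\alpha c_\beta/(\sigma_\alpha+\sigma_\beta)$. The first two terms are $p^2$ and $q^2$. For the third, write $c_\alpha c_\beta = pq\sqrt{\sigma_\alpha\sigma_\beta}$, so that
\[
\frac{4c_\alpha c_\beta}{\sigma_\alpha+\sigma_\beta} = 2pq\cdot\frac{2\sqrt{\sigma_\alpha\sigma_\beta}}{\sigma_\alpha+\sigma_\beta} = 2\lambda pq .
\]
The resulting sum is exactly the positive part of (11).

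\emph{Step 2: the three-point term.} I add the three-point contribution $-g^2_{\alpha\alpha\beta}\Gamma$ from Lemma 4.2 and (8). Proposition 5.1 shows that $\Gamma > 0$ in closed form. Additivity of the contributions at the one-loop level, as used in Lemma 4.1, then yields (11).

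\emph{Step 3: case analysis.} The tipping point is continuous if $a_4 > 0$, tricritical if $a_4 = 0$, and discontinuous if $a_4 < 0$.
\begin{itemize}
\item \emph{Case (i).} Here $\sigma_\alpha = \sigma_\beta$, which is the paper's identification of $\alpha = \beta$, so $\lambda = 1$. The base is then $(p-q)^2$, which vanishes when $p = q$. This gives $a_4 = -g^2_{\alpha\alpha\beta}\Gamma < 0$ whenever $g_{\alpha\alpha\beta} \neq 0$.
\item \emph{Cases (ii)--(iv).} When $\alpha \neq \beta$ we have $\lambda < 1$, and Lemma 5.3 shows the base $B = p^2 + q^2 - 2\lambda pq$ is strictly positive. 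Define $g_c^2 = B/\Gamma > 0$, which is (12). Then $a_4 = \Gamma\,(g_c^2 - g^2_{\alpha\alpha\beta})$. Its sign is positive, zero, or negative according as $|g_{\alpha\alpha\beta}|$ is less than, equal to, or greater than $g_c$, which gives cases (ii), (iii), (iv).
\item \emph{Tricritical case.} At $a_4 = 0$, boundedness of the Landau expansion $a_2 m^2 + a_4 m^4 + a_6 m^6$ requires $a_6 > 0$. The $U(1)$ symmetry rules out odd-order terms, so the expansion is in even powers of $|m|$.
\end{itemize}

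\emph{Step 4: the limit $\alpha \to \beta$.} As $\alpha \to \beta$ we have $\lambda \to 1$, and by continuity of (9) $\Gamma$ tends to the positive value $\Gamma(\sigma,\sigma)$ from the Remark. Hence $g_c^2 \to (p-q)^2/\Gamma(\sigma,\sigma)$, which is zero on the branch $p = q$.

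\emph{Main obstacle.} The hard step is justifying the additivity and normalization behind (11). The paper only asserts that the three-point contribution enters as $-g^2\Gamma$ with the same normalization as (5)--(7), and that the quartic coefficient is the sum of these one-loop pieces. I would first try to derive this from a second-order expansion in $g_{\alpha\alpha\beta}$ of an effective action whose propagator is $\tilde C$ and whose cubic vertex is (4). The goal is to check that the two vertices joined by three propagators reproduce exactly the integrand of (8).

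A second point needs care. Case (i) requires $p = q$ in addition to $\lambda = 1$, so the claim of "maximal fragility" should be stated conditionally on $p = q$. For $p \neq q$ the base is $(p-q)^2 > 0$, and that subcase is then treated exactly like cases (ii)--(iv), with threshold $g_c^2 = (p-q)^2/\Gamma$.
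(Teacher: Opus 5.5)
Your outline is the paper's own proof. The paper also obtains (11) by ``assembling'' (5)--(7) with the three-point term, reads the cases off the sign of $a_4$ via Lemma 5.3 and $\Gamma>0$, and imposes your restrictions ($p=q$ in case (i), $p\to q$ for $g_c\to 0$). Your Step 1 algebra and Steps 3--4 are correct given (11) and the paper's definitions. The gap is that nothing establishes (11), and the justification you give in Step 2 contradicts it. Lemma 4.1's formula $\delta a_4=-\int(\tilde F_\alpha+\tilde F_\beta)^2\,d\omega/2\pi$ puts a minus sign on the diagonal terms as well. So ``additivity as in Lemma 4.1'' yields $-(p^2+q^2+2\lambda pq)$, which forces $a_4<0$ in every case and erases the classification. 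The combination $p^2+q^2-2\lambda pq$ is instead $\int(\tilde F_\alpha-\tilde F_\beta)^2\,d\omega/2\pi$, which is why Lemma 5.3 holds. No argument is given for why a quartic coefficient should equal this squared $L^2$ distance: the signs in (5)--(6) are simply asserted in the paper, and you inherit them.

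The three-point term has the same status, and the derivation you sketch would not supply it. Two cubic vertices joined by three lines form a graph with no external legs. That is a vacuum (free-energy) contribution, not a coefficient of the fourth power of the order parameter. A graph with four external legs built from two cubic vertices has six half-edges, four of them external. It therefore has a single internal line and no loop integration, so order-$g^2_{\alpha\alpha\beta}$ perturbation theory does not produce the double integral (8) as a quartic coefficient. In standard Landau theory, the order-$g^2$ effect of an $O_\alpha^2O_\beta$ coupling is the tree-level shift from integrating out $O_\beta$, proportional to its static propagator.

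Finally, the step $\alpha\neq\beta\Rightarrow\lambda<1$, on which $g_c>0$ rests, uses the identification of $\alpha=\beta$ with $\sigma_\alpha=\sigma_\beta$ that you import from the paper. Nothing supports it: by the factorization (3) and Proposition 2.5, the order of a tMSI kernel and the width of its Lorentzian shape function are independent parameters. Your proposal is therefore exactly as complete as the paper's proof, but both reduce the theorem to the unproved formula (11) together with this identification.
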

\begin{proof}
Equation (11) follows from assembling the contributions computed in equations (5)–(7) and the three-point term from Lemma 4.2 and Proposition 5.1.

\textbf{Case (i).} When $\alpha = \beta$, $\sigma_\alpha = \sigma_\beta = \sigma$ and the two operators are in the same eigenspace of the dilation generator. In the symmetric case $p = q$ (equal amplitudes at equal widths), Lemma 5.3 gives $p^2 + q^2 - 2\lambda p q = 0$, so $a_4 = -g^2_{\alpha\alpha\beta}\Gamma(\sigma, \sigma) < 0$. For $p \neq q$ with $\lambda = 1$, the positive part $(p - q)^2 > 0$ but $g^2_c = (p - q)^2/\Gamma$ remains finite, so case (i) applies whenever $g_{\alpha\alpha\beta}$ exceeds this small threshold — which is generic.

\textbf{Cases (ii)–(iv).} When $\alpha \neq \beta$, $\lambda < 1$ strictly, and by Lemma 5.3 the positive part $(p^2 + q^2 - 2\lambda p q) > 0$. By Proposition 5.1, $\Gamma > 0$. Hence $g^2_c > 0$, and the three cases follow from $a_4 \gtrless 0$ according to whether $g^2_{\alpha\alpha\beta} \gtrless g^2_c$.

Vanishing of $g_c$ as $\alpha \to \beta$: as $\sigma_\alpha \to \sigma_\beta$, $\lambda \to 1$ and $p^2 + q^2 - 2\lambda p q \to 0$ (for $p \to q$), while $\Gamma(\sigma_\alpha, \sigma_\beta) \to \Gamma(\sigma, \sigma) > 0$. Hence $g_c \to 0$, confirming that the simple critical point is maximally fragile.
\end{proof}

\section{Early Warning Diagnostic}

\subsection{Empirical estimators}
Given a multivariate time series $X(t_1), \dots, X(t_N)$ sampled at $N$ discrete times on $\R^+$, construct the empirical temporal correlation matrix:
\[
\hat{C}_N(i, j) = \frac{1}{N} X(t_i)^\top X(t_j), \quad i, j = 1, \dots, N.
\]

\begin{definition}[Empirical tMSI]
$\hat{C}_N$ satisfies approximate tMSI with exponent $\hat{\alpha}$ if
\[
\hat{C}_N(ki, kj) \approx k^{-\hat{\alpha}} \hat{C}_N(i, j)
\]
for integer $k$ and indices $ki, kj \le N$, with scaling error
\[
\Delta^{(k)}_N(i, j) = \left| \frac{\hat{C}_N(ki, kj)}{\hat{C}_N(i, j)} - k^{-\hat{\alpha}} \right|.
\]
The mean error $\langle \Delta^{(k)}_N \rangle \to 0$ as $r/N \to 1$, where $r$ is the number of retained eigenmodes, by the convergence results of [11, 20, 21] (Section 8.3), applied here to the temporal matrix.
\end{definition}

\noindent\textbf{Estimating $\hat{\alpha}$.}
Regress $\log \hat{C}_N(ki, kj)$ on $\log k$ for fixed $i, j$ and multiple values of $k = 2, 3, \dots$. The slope is $-\hat{\alpha}$. Average over admissible index pairs $(i, j)$ to reduce variance.

\noindent\textbf{Estimating $\hat{\beta}$.}
Compute the ordered eigenvalues $\lambda_1 \ge \lambda_2 \ge \dots \ge \lambda_N$ of $\hat{C}_N$. Regress $\log \lambda_n$ on $\log n$ over the bulk range $1 \ll n \ll N$. The slope is $-\hat{\beta}$.

Both estimators are consistent as $N \to \infty$ by the spectral convergence of the empirical Mellin multiplier to the population multiplier.

\subsection{The diagnostic ratio and its consistency}
\begin{definition}[tMSI diagnostic ratio]
$D = \hat{\alpha} / \hat{\beta}$.
\end{definition}
By Theorem 5.4: $D = 1$ signals a simple critical point with a generically discontinuous tipping point; $D \neq 1$ signals temporal multicriticality, with the tipping point character determined by $g_{\alpha\alpha\beta}$ relative to $g_c(\alpha, \beta)$.

\begin{theorem}[Consistency of $D$]
As $N \to \infty$, $D \to \alpha/\beta$ in probability. Under the approach $K \to K_c$ with the window length $N$ satisfying
\[
N \cdot \sigma(K)^2 \to \infty \quad \text{as } K \to K_c, \tag{13}
\]
the estimator $D$ detects the tipping point character before the transition occurs. The finite-sample bias of $\hat{\alpha}$ is $O(1/(N\sigma^2))$.
\end{theorem}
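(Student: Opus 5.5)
The plan is to prove the three assertions of Theorem 6.3 in order: consistency of $D$, the $O(1/(N\sigma^2))$ bias of $\hat{\alpha}$, and detection before the transition under (13).

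\textbf{Consistency.} By the continuous mapping theorem, it suffices to show $\hat{\alpha}\to\alpha$ and $\hat{\beta}\to\beta$ in probability with $\beta>0$; then $D=\hat{\alpha}/\hat{\beta}\to\alpha/\beta$. For $\hat{\alpha}$, write $\hat{C}_N(i,j)=C(t_i,t_j)+E_N(i,j)$ with a fluctuation term $E_N$. The factorization of Theorem 2.2 gives $\log C(kt_i,kt_j)=-\alpha\log k+\log C(t_i,t_j)$ exactly, so the regression slope equals $-\alpha$ plus a linear functional of $\log(1+E_N/C)$. A law of large numbers for the empirical correlations then makes this perturbation vanish. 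For $\hat{\beta}$, I would apply Weyl's inequality, $|\lambda_n(\hat{C}_N)-\lambda_n(C_N)|\le\|E_N\|_{\mathrm{op}}$, together with the identification of $\lambda_n(C_N)$ with $\tilde{F}(\omega_n)$ used in the proof of Theorem 3.2. This transfers the bulk power-law fit from the population matrix to the empirical one, provided $\|E_N\|_{\mathrm{op}}$ is small relative to the bulk eigenvalues.

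\textbf{Bias of $\hat{\alpha}$.} The effective number of independent samples in a window of length $N$ is of order $N/\tau=N\sigma$. This follows from the Lorentzian correlation $F(u)=c\rho^{|\ln u|}$ of Proposition 2.5: summing its autocorrelation gives an integrated correlation time $\sim 2/\sigma$. The relative error of each entry, $E_N/C$, therefore has variance $O(1/(N\sigma))$. A second-order Taylor expansion of the logarithm,
\[
\E\log(1+\epsilon)=-\tfrac12\E\epsilon^2+O(\E|\epsilon|^3),
\]
gives a bias of order $\operatorname{Var}(\epsilon)$. The extra factor $\sigma^{-1}$ comes from the fact that the regressors $\log k$ probe log-time separations up to $\ln N$. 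On that range the Lorentzian shape is flat to within $O(\sigma)$, so the slope estimate is a difference quotient divided by an effective leverage of order $\sigma$. Combining the two gives bias $O(1/(N\sigma^2))$.

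\textbf{Early detection.} Under (13) the bias and the standard deviation of $\hat{\alpha}$ both tend to $0$, and the analogous Weyl bound gives the same for $\hat{\beta}$. Hence $D$ concentrates at $\alpha/\beta$ while $K<K_c$ still holds, that is, before the transition occurs. Since Theorem 5.4 is a continuous classification in $(\alpha,\beta)$ away from the boundary, the sign class is then read off correctly with probability tending to one.

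\textbf{Main obstacle.} The hardest step is controlling the fluctuations at the stated rate. The quantity $\|E_N\|_{\mathrm{op}}$ must be bounded for a strongly dependent, nonstationary (power-law envelope) series, and the bulk eigenvalues of $C_N$ must be kept separated from that noise. I would first try a block-bootstrap or mixing argument, using blocks of length $\sim\sigma^{-1}$ and a matrix Bernstein inequality across blocks. The eigenvalue approximation borrowed from [11] may need to be imported as a hypothesis rather than proved here.
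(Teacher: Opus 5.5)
At the level of outline, your consistency argument matches the paper's one-line appeal to the law of large numbers. The step that fails is your derivation of the $O(1/(N\sigma^2))$ bias, which is where you go beyond the paper. The paper's proof only asserts that (13) lets the effective number of independent temporal observations outgrow the coherence-time bias. It then uses $\sigma(K)\sim(K_c-K)^\nu$ to show that (13), i.e.\ $N\gg(K_c-K)^{-2\nu}$, can be met at every fixed $K<K_c$.

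Your extra factor $\sigma^{-1}$ has no source. The regression of Section 6.1 runs along the dilation orbit $(ki,kj)$ with $(i,j)$ fixed. So, as you yourself note in the consistency step, $\log C(t_{ki},t_{kj})=-\alpha\log k+\log C(t_i,t_j)$ exactly. The shape function enters only through $F(t_i/t_j)$, which is constant along the orbit, so the flatness of $F$ on log-separations up to $\ln N$ never reaches the slope. The slope bias is the least-squares slope of the entrywise log-biases $b_k\approx-\frac{1}{2}\mathrm{Var}(\epsilon_k)$ against $\log k$. The design $\sum_k(\log k-\overline{\log k})^2$ depends on the range of $k$, not on $\sigma$. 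Worse, suppose the fluctuations are scale invariant in law, which is automatic for a centered Gaussian process with a tMSI kernel and $t_{ki}=kt_i$. Then $\epsilon_k$ has the same law for every $k$, so $b_k$ is independent of $k$, is absorbed by the intercept, and contributes nothing to $\hat\alpha$. The stated bound then holds trivially, not by your mechanism. Any genuine $\sigma$-dependent bias would have to come from a modelled departure from self-similarity at finite $\sigma$.

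The input $\mathrm{Var}(\epsilon)=O(1/(N\sigma))$ is also unsupported. The integral $\int e^{-\sigma|s|}\,ds=2/\sigma$ is a correlation length in $s=\ln(t/t')$, not in linear time. So $N$ equally spaced samples span log-time $\ln N$ and contain about $\sigma\ln N$ independent blocks, not $N\sigma$. Moreover, the paper's $\hat{C}_N(i,j)=\frac{1}{N}X(t_i)^\top X(t_j)$ averages over the $N$ components at a fixed pair of times. The only averaging within a single entry is across components, so the temporal correlation time does not set its variance.

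Two further steps do not go through as written. For $\hat\beta$, the proviso you flag is not a technicality; it fails in the paper's regime. $\hat C_N$ is an $N\times N$ Gram matrix averaged over $N$ components. Even for independent Gaussian components, $\|E_N\|_{\mathrm{op}}$ is of order $\lambda_1\sqrt{r/N}$ with $r=\mathrm{tr}\,C_N/\lambda_1\ge 1$. This exceeds $\lambda_n\approx\lambda_1 n^{-\beta}$ once $n$ passes roughly $N^{\min\{1/2,\,1/(2\beta)\}}$, up to logarithms. Weyl's additive bound therefore gives no control over the upper part of the bulk range $1\ll n\ll N$ on which $\hat\beta$ is fitted. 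You would need a relative (multiplicative) perturbation argument, or a fitting window stopping below that index. Even then, by Theorem 3.2, $\beta$ depends on $N$ for a Lorentzian multiplier, so $\alpha/\beta$ is a moving target.

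For early detection, concentration of $D$ at $\alpha/\beta$ does not let you read off the sign of $a_4$. Outside case (i), Theorem 5.4 decides by comparing $g_{\alpha\alpha\beta}$ with $g_c$. Here $g_c$ depends on $c_\alpha,c_\beta,\sigma_\alpha,\sigma_\beta$, and $g_{\alpha\alpha\beta}$ is an independent structure constant, so neither is a function of $(\alpha,\beta)$ alone. Section 8.3 itself concedes that $g_{\alpha\alpha\beta}$ must be estimated separately. What $D$ delivers (Section 6.2) is the distinction between $\alpha=\beta$ and $\alpha\neq\beta$. The paper's proof reads ``before the transition'' simply as the satisfiability of (13) at each fixed $K<K_c$, where $\sigma(K)>0$.
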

\begin{proof}
Consistency of $\hat{\alpha}$ and $\hat{\beta}$ follows from the law of large numbers applied to the regression estimators, using the convergence of the empirical Mellin multiplier. The rate condition (13) ensures that the effective number of independent temporal observations grows faster than the bias induced by finite coherence time $\tau = \sigma^{-1}$. Since $\sigma(K) \sim (K_c - K)^\nu$ by standard critical scaling, condition (13) requires $N \gg (K_c - K)^{-2\nu}$, which diverges at $K_c$ but is satisfiable for any $K < K_c$.
\end{proof}

\subsection{Warning time and tipping point character}
Define the warning time $T_w$ as the time before the tipping point at which $D$ departs from its baseline value by a detectable amount $\delta$:
\[
T_w = \inf\{t < t_c : |D(t) - D_0| > \delta\}.
\]
For a smooth approach $K(t) \to K_c$ with rate $dK/dt = -\varepsilon$:
\[
T_w \sim \varepsilon^{-1}(K_c - K_0),
\]
where $K_0$ is the current parameter value.

The critical advantage of the tMSI diagnostic over scalar early warning signals is this: rising variance and autocorrelation provide $T_w$, the time to the tipping point. The ratio $D$ provides the character of the tipping point at $T_w$—recoverable or catastrophic—as independent information not contained in any scalar observable. These two pieces of information together constitute a complete pre-transition description of the system’s fate.

\section{Physical Interpretation}
We illustrate the tMSI framework and Theorem 5.4 in two settings where the mapping to a multivariate phase observable is transparent and the clinical distinction between continuous and discontinuous tipping points is consequential. These examples specify the framework precisely enough to motivate experimental tests; their execution is deferred to subsequent work.

\subsection{Epilepsy: seizure onset}
Consider an $N$-electrode EEG recording with voltage signals $V_j(t)$, $j = 1, \dots, N$. Extract the instantaneous phase $\phi_j(t) = \arg[V_j(t) + iH\{V_j(t)\}]$ via the Hilbert transform $H$, and form the phase coherence matrix:
\[
C(t, t') = \E\left[ e^{i(\phi(t) - \phi(t'))} \right].
\]
As cortical coupling $K$ increases toward seizure onset [22, 23], the coherence time diverges and $C(t, t')$ develops tMSI at or near $K_c$. The dynamical exponent $\alpha$ encodes the geometry of long-range cortical phase coherence; the spectral exponent $\beta$ encodes the heterogeneity of natural frequencies across cortical oscillators.

Theorem 5.4 predicts:
\begin{itemize}
    \item Focal seizures with gradual onset correspond to $g_{\alpha\alpha\beta} < g_c$: the transition is continuous, the post-ictal state is accessible by reversal of $K$, and $D$ approaches 1 smoothly in the pre-ictal period.
    \item Generalized tonic-clonic seizures with sudden onset correspond to $g_{\alpha\alpha\beta} > g_c$: the transition is discontinuous and hysteretic, and $D$ departs abruptly from its baseline.
    \item Seizures with secondary generalization—beginning focally and becoming generalized—correspond to a real-time crossing of the tricritical boundary $g_{\alpha\alpha\beta} = g_c$ during the ictal period, detectable as a discontinuity in $dD/dt$.
\end{itemize}
The last prediction is the sharpest: secondary generalization should produce an observable signature in the temporal correlation structure of the EEG that is distinct from both purely focal and purely generalized onset. The dynamical disease framework for epilepsy [24] motivates this mapping precisely. Existing multi-electrode EEG corpora (e.g., the Temple University EEG Corpus, the European Epilepsy Database) contain the recordings needed to test this prediction without new data collection.

\subsection{Acute myocardial infarction}
From a multi-lead ECG, extract the instantaneous cardiac phase $\psi_j(t) = 2\pi(t - t_n)/(t_{n+1} - t_n)$ for $t_n \le t < t_{n+1}$ (where $t_n$ are R-peak times) for each lead $j = 1, \dots, L$, and form the phase coherence matrix $C(t, t') = \E[e^{i(\psi(t) - \psi(t'))}]$. As ischemic burden increases toward the infarction threshold, the myocardial excitability transition produces tMSI at $K_c$ [25].

Theorem 5.4 predicts:
\begin{itemize}
    \item Stuttering MI with intermittent ischemia corresponds to a continuous tipping point ($g_{\alpha\alpha\beta} < g_c$): reperfusion is possible and $D$ increases gradually over hours before clinical presentation.
    \item Sudden complete occlusion with irreversible infarction corresponds to a discontinuous tipping point ($g_{\alpha\alpha\beta} > g_c$): the transition is hysteretic and $D$ departs abruptly from baseline in the prodromal period.
\end{itemize}
The ratio $D$ computed from routine multi-lead ECG thus provides a pre-infarction warning signal that classifies not merely whether a tipping point is approaching but whether the transition is reversible—the distinction that determines whether emergency reperfusion therapy will succeed. The MIMIC-IV database contains continuous multi-lead ECG recordings from ICU patients with documented MI suitable for retrospective testing.

In both examples the experimental protocol is identical: compute $\hat{C}_N(t, t')$ over a sliding window, estimate $\hat{\alpha}$ and $\hat{\beta}$ by the methods of Section 6, compute $D(t)$, and test whether the classification by $D$ agrees with the clinical outcome.

\section{Discussion and Outlook}

\subsection{Relation to spatial matrix scale invariance}
The tMSI framework developed here is the temporal analog of the spatial MSI theory of [11]. In that work, the matrix indices label spatial degrees of freedom (modes, variables, oscillators), and scale invariance is across the mode structure of a coupling or correlation matrix. Here, the indices are time arguments, and scale invariance is across the temporal structure of a dynamical correlation kernel.

The mathematical apparatus—kernel factorization, Mellin diagonalization, the decoupling of geometric and spectral exponents—transfers verbatim from the spatial to the temporal setting. What is genuinely new in the temporal case is the physical interpretation: the dynamical exponent $\alpha$ acquires meaning as the exponent of temporal coarse-graining, and the decoupling $\alpha \neq \beta$ becomes the signature of competing timescales rather than competing lengthscales. The classification theorem has no spatial analog: the connection between the two-exponent structure and the Landau theory of tipping points is intrinsically dynamical.

\subsection{Connection to Kuramoto synchronization}
The Kuramoto model of coupled oscillators [26–28] provides the most natural physical realization of tMSI. At the critical coupling $K_c$, the order parameter $r(t)$ and its two-time autocorrelation develop temporal scale freedom, and the phase coherence matrix $C(t, t')$ satisfies tMSI exactly. The Lorentzian Mellin multiplier arises directly from the fluctuation-dissipation structure near $K_c$ (Proposition 2.5).

The classification theorem acquires a striking interpretation in this context. When all oscillators share a common geometric scaling (the kernel factorization condition, $\alpha = \beta$), the synchronized state is a simple critical point—and, by case (i) of Theorem 5.4, it is maximally fragile: any nonzero mixing between the scaling operators drives the synchronization transition discontinuous and hysteretic. The beautiful, soft synchronization of Kuramoto’s fireflies sits, mathematically, at the edge of catastrophe.

For Kuramoto systems with structured frequency distributions $g(\omega)$—bimodal, fat-tailed, or otherwise heterogeneous—the natural frequencies of different oscillator subpopulations play the role of the mode-dependent scaling factors $\gamma_k(n)$ mentioned below. The tricritical point known to arise with bimodal $g(\omega)$ [29, 30], where the transition switches from continuous to discontinuous, corresponds precisely to the tricritical boundary $g_{\alpha\alpha\beta} = g_c$ of Theorem 5.4.

\subsection{Open directions}
Several mathematical directions remain open.

\noindent\textbf{Mode-dependent scaling factors.}
The most natural generalization replaces the universal dynamical exponent $\alpha$ with mode-dependent scaling factors $\gamma_k(n)$, where distinct eigenmodes carry independent geometric scaling dimensions. This would accommodate systems in which different dynamical modes approach criticality at different rates—a situation that arises in networks with heterogeneous coupling geometry. The tMSI classification theorem would then apply mode by mode, with a richer phase structure in the space of mode-dependent exponents.

\noindent\textbf{Asymptotic analysis of $J(\sigma_\alpha, \sigma_\beta)$.}
The closed-form expression (10) for $J(\sigma_\alpha, \sigma_\beta)$ admits a systematic asymptotic expansion in the regime $\sigma_\alpha \gg \sigma_\beta$ (separation of timescales) that would quantify the approach to the tricritical boundary as a function of the exponent separation $|\alpha - \beta|$.

\noindent\textbf{Beyond Gaussian fluctuations.}
The derivation of $a_4$ in Section 5 operates at the Gaussian (one-loop) level. Higher-loop corrections would renormalize $g_{\alpha\alpha\beta}$ and shift the tricritical boundary. A systematic renormalization group analysis of the tMSI field theory would determine whether the classification of Theorem 5.4 is stable under these corrections.

\noindent\textbf{Empirical estimation of $g_{\alpha\alpha\beta}$.}
The classification theorem requires not only $\hat{\alpha}$ and $\hat{\beta}$ but also an estimate of the structure constant $g_{\alpha\alpha\beta}$. From equation (4), $g_{\alpha\alpha\beta}$ is recoverable from the three-point temporal correlation function of the observable, evaluated in the Mellin domain. Developing a practical, finite-sample estimator for $g_{\alpha\alpha\beta}$ is a prerequisite for full experimental implementation of the diagnostic.

\section*{Statements and Declarations}

\subsection*{Competing Interests}
The authors declare no competing interests.

\subsection*{Data Availability}
No datasets were generated or analyzed during the current study.

\subsection*{Author Contributions}
AF initiated the physical framework for temporal matrix scale invariance. LAJ developed the mathematical foundations, proved the classification theorem, and wrote the manuscript. Both authors contributed to the interpretation and approved the final version.

\end{document}